%% file: draft.tex
\documentclass[11pt,runningheads,orivec]{llncs}
\usepackage[left=1in, top=1in, right=1in, bottom=1in]{geometry}
\usepackage[T1]{fontenc}
\usepackage{amsmath,amssymb,mathtools}
\usepackage{booktabs,array}
\usepackage{graphicx}
\usepackage{float}
\usepackage{placeins}
\usepackage[colorlinks=true, citecolor=blue, linkcolor=red, urlcolor=black]{hyperref}
\hypersetup{pdftitle={Does AI Help or Harm? Endogenous Information Acquisition with AI Advice},
  pdfauthor={Kexin Chen, Tao Lin, and Jianwei Huang}}
\usepackage{enumitem}
\usepackage{xparse}
\usepackage{xcolor}
\usepackage{bbding}
\newcommand{\X}{\mathcal X}
\newcommand{\Y}{\mathcal Y}
\newcommand{\A}{\mathcal A}
\newcommand{\E}{\mathbb E}
\newcommand{\R}{\mathbb R}
\newcommand{\DeltaX}{\Delta(\X)}
\newcommand{\DeltaY}{\Delta(\Y)}
\newcommand{\cav}{\operatorname{cav}}
\newcommand{\supp}{\operatorname{supp}}
\newcommand{\argmax}{\operatorname*{arg\,max}}

\newcommand{\one}{\mathbf 1}

\providecommand{\qedsymbol}{\ensuremath{\square}}

\RenewDocumentEnvironment{proof}{O{}}%
{\par\noindent\textit{Proof\if\relax\detokenize{#1}\relax\else\ #1\fi.}\ }%
{\hfill\qedsymbol\par}

\newcounter{wineexample}
\renewcommand{\thewineexample}{\arabic{wineexample}}

\NewDocumentEnvironment{wineexample}{O{}}%
{\refstepcounter{wineexample}\par\medskip\noindent\textbf{Example~\thewineexample\if\relax\detokenize{#1}\relax\else\
(#1)\fi.}\ }%
{\par\medskip}

\begin{document}

\title{Does AI Help or Harm?
\\Endogenous Information Acquisition with AI Advice}
\titlerunning{Does AI Help or Harm? Endogenous Information Acquisition with AI Advice}
\author{
Kexin Chen\inst{1}\and
Tao Lin\inst{2}\and
Jianwei Huang\inst{3,4} \Envelope
}
\authorrunning{K. Chen et al.}
\institute{
School of Science and Engineering,
The Chinese University of Hong Kong,
Shenzhen.
\and
School of Data Science,
The Chinese University of Hong Kong,
Shenzhen.
\and
School of Science and Engineering,
Shenzhen Institute of Artificial Intelligence and Robotics for Society,
Shenzhen Key Laboratory of Crowd Intelligence Empowered Low-Carbon Energy Network,
and CSIJRI Joint Research Center on Smart Energy Storage,
The Chinese University of Hong Kong,
Shenzhen.
\and
Shenzhen Loop Area Institute.
\\[0.3em]
\email{kexinchen2@link.cuhk.edu.cn}
\quad
\email{lintao@cuhk.edu.cn}
\quad
\email{jianweihuang@cuhk.edu.cn}
}

\maketitle

\begin{abstract}
\normalsize
AI advice increasingly enters human decision workflows before costly information acquisition and final action.
This paper studies how such advice affects decision quality, measured by the gross utility of the final action.
Advice shifts beliefs before the decision maker chooses an acquisition experiment with uniformly posterior-separable (UPS) costs.
The analysis characterizes gross effects through the continuation gross value generated by net-optimal acquisition.
Convexity of this value function is necessary and sufficient for every AI signal to weakly improve gross utility at every prior, while nonconvexity allows a binary signal to cause gross harm.
The criterion gives tight state-space results.
Every AI signal is weakly gross-improving in binary-state problems under arbitrary actions, payoffs, and UPS costs.
A minimal three-state problem can generate gross harm through concentrated shutdown, where valuable acquisition stops.
For classification problems with Shannon acquisition costs, the criterion becomes an active-set slope test.
It guarantees weak gross improvement for homogeneous classification and all three-state weighted classification problems, while a minimal four-state failure arises from attention dilution, where acquisition shifts away from high-stake distinctions.
Along a Blackwell-increasing mixture path, gross utility can first decrease and then increase.
With endogenous interpretation, cheaper interpretation can reduce gross utility.

\keywords{Rational Inattention \and AI-Assisted Decision-Making \and Endogenous Information Acquisition}
\end{abstract}

\section{Introduction}
\label{sec:intro}

AI systems are increasingly embedded in decision workflows and provide advice to human decision makers.
In many settings, people can gather additional information after receiving AI advice.
A clinician may observe an algorithmic risk score and then review additional records or order tests before choosing a treatment.
A worker may receive an answer from a generative assistant and then search for additional evidence or verify the answer before submitting a response~\cite{GreenChen2019,DeArteagaEtAl2020,DellAcquaEtAl2026,AgarwalMoehringWolitzky2025}.
Similar timing arises in lending, eligibility review, and other settings where AI advice is followed by human information gathering and action.

Such follow-up information acquisition is costly.
Reviewing records, running tests, searching, and verifying information require time, effort, attention, or other resources.
AI advice can therefore affect both what the decision maker knows and what additional information she chooses to acquire.
These effects jointly shape the final action, so whether AI advice improves decision quality is unclear.
This paper asks

\begin{center}
\textit{When is AI advice helpful or harmful in decision problems with costly information acquisition?}
\end{center}

The formal model is a Bayesian decision problem with active information acquisition.
A payoff-relevant state (unknown to the decision maker) is drawn from a known prior, and AI advice is modeled as an exogenous signal that changes the decision maker's belief about the state.
After receiving the AI advice, the decision maker chooses what additional information to acquire before acting.
Information acquisition costs are uniformly posterior-separable (UPS), so the cost of an experiment is measured by its expected reduction in a concave uncertainty measure.
The decision maker chooses acquisition strategy to maximize her net utility, which is the decision quality (expected payoff of the final action) minus the information acquisition cost.
While information acquisition is costly to the decision maker, we (as an analyst) focus on the gross utility of the decision maker, which does not include the information acquisition cost.
In many practical scenarios, decision quality is evaluated separately from the effort devoted to acquire information, and such effort is often privately borne by the decision maker and difficult to observe by the analyst. 
Our work thus use the gross utility as a measure to study when AI advice helps or harms decision making. 

We make three main contributions.
First, we characterize when AI advice improves or harms the gross utility of the decision maker through the posterior geometry of the acquisition problem.
The key object is the continuation gross value function.
At each belief, this function gives the gross utility generated by the optimized acquisition strategy.
We show that the necessary and sufficient condition for every AI signal to weakly improve gross utility at every prior is the convexity of the gross value function. 
When the function is nonconvex, a binary AI signal can change the optimal acquisition response in a way that strictly lowers expected gross utility.
This criterion reduces the evaluation of AI advice to a geometric property of the continuation problem.

Second, we derive general implications for decision problems.
In binary-state problems, every AI signal is weakly gross-improving for arbitrary action sets, payoffs, and UPS acquisition costs.
This binary-state result is tight in state-space size.
With three states, gross harm can arise through concentrated shutdown, where advice-induced posteriors are decisive enough to stop follow-up acquisition even though acquisition at the no-AI prior would have generated information valuable for the final action.

Third, we apply the geometric criterion to classification problems with Shannon acquisition costs and derive an active-set slope test.
The test shows that homogeneous classification and all three-state weighted classification problems are weakly gross-improving.
It also identifies a minimal weighted-classification failure with four states, where attention dilution shifts acquisition away from high-stake distinctions toward a lower-stake active class.

We also study how gross utility varies with the informativeness and interpretation of AI advice.
Whenever a gross-harmful signal exists, there is a Blackwell-increasing mixture path along which gross utility first decreases and then increases.
Extending the model to endogenous interpretation shows that cheaper interpretation can also reduce gross utility.

\subsection{Related Work}

This work connects AI-assisted human decisions, rational inattention, sequential learning, and information design.
We focus on AI advice that arrives before costly self-acquisition and ask when the resulting acquisition response improves or harms gross utility.

\paragraph{AI-Assisted Human Decisions.}
Recent work studies when AI assistance helps or hurts human decisions.
One line studies behavioral and informational frictions in the use of AI input, including reliance behavior~\cite{BucincaEtAl2021,DeJongEtAl2025,DietvorstSimmonsMassey2015,LoggMinsonMoore2019} and correlation neglect under overlapping human and AI information~\cite{AgarwalMoehringRajpurkarSalz2023,amin2026bayesian}.
Another line studies how confidence, feature transparency, and information-value diagnostics shape the usefulness of AI input~\cite{BalakrishnanFerreiraTong2024,LiSteyvers2025,GuoWuHartlineHullman2025}.
These papers examine how AI input is weighted, trusted, interpreted, or presented.
Related work also shows that informativeness alone can be an incomplete measure of algorithmic assistance when human learning or delegation responds endogenously~\cite{NotiEtAl2025,Xu2024AlgorithmAssisted}.
The closest paper is by Boyac{\i}, Canyakmaz, and de V\'ericourt~\cite{BoyaciCanyakmazDeVericourt2024}, who show that machine predictions weakly improve decision quality in a binary-state, binary-action homogeneous-classification model with Shannon information costs.
We show that this improvement result does not extend automatically.
In general decision problems with UPS acquisition costs, AI advice can be gross-harmful.
We characterize conditions for both gross improvement and gross harm, and identify their binary homogeneous-classification result as a gross-improving special case.

\paragraph{Rational Inattention.}
Rational inattention begins with Sims's model of costly information processing~\cite{Sims2003}, with Shannon entropy providing the canonical information measure~\cite{Shannon1948}.
The literature then developed tractable tools for static choice.
Mat\v{e}jka and McKay~\cite{MatejkaMcKay2015} derive the logit representation used in our Shannon calculations, while revealed-preference and active-set methods identify optimal attention from observed choices~\cite{CaplinDean2015,CaplinDeanLeahy2019,LipnowskiRavid2022}.
A parallel line studies alternative cost structures, including posterior-separable costs~\cite{Denti2022}, an axiomatic framework with constant marginal information costs~\cite{PomattoStrackTamuz2023}, and divergence-based costs~\cite{BloedelDentiPomatto2025}.
Another development studies costly information acquisition over time, including inertia and delay~\cite{SteinerStewartMatejka2017}, decisions that take time~\cite{HebertWoodford2023}, reversible and irreversible decisions~\cite{XuZariphopoulouZhang2023}, learning and stopping under ambiguity~\cite{AusterCheMierendorff2024}, and the costs of sequential information acquisition~\cite{BloedelZhong2025}.
We use these tools to study how exogenous AI advice changes acquisition and how the induced net-optimal policy affects the gross utility of the final action.

\paragraph{Sequential Learning.}
Sequential testing is the classical foundation for costly learning over time~\cite{Wald1947}, and social learning models show how early information can redirect later decisions and information flows~\cite{Banerjee1992,BikhchandaniHirshleiferWelch1992}.
More recent work studies how externally supplied or strategically provided information interacts with later information acquisition.
Matyskov\'a and Montes~\cite{MatyskovaMontes2023} study a Bayesian-persuasion problem in which a sender designs information while anticipating the receiver's costly acquisition.
Their focus is the sender's strategic design problem and the induced equilibrium information structure.
We instead take AI advice as exogenous and study how it changes the receiver's acquisition and the gross utility of the final action.
Cheng~\cite{Cheng2023} studies externally provided information is freely received before subsequent learning and characterizes how such a signal can affect the feasibility and direction of later Blackwell learning.
Our model shares the timing in which external information precedes endogenous learning, while our outcome variable is the gross utility selected by a net-optimal decision maker.

\paragraph{Information Design and Rationally Inattentive Receivers.}
Bayesian persuasion uses posterior splits as the central geometric object for designing information structures~\cite{KamenicaGentzkow2011}, and subsequent work extends this posterior-geometry perspective to broader design environments~\cite{BergemannMorris2019,Kamenica2019}.
A related rational-inattention literature studies receivers who process disclosed information under attention costs~\cite{BloedelSegal2018,LipnowskiMathevetWei2020,CheKimMierendorff2020}.
These papers ask how information should be designed or how strategically disclosed information is processed by a receiver facing attention costs.
We use the same posterior-split geometry to study how exogenous AI advice changes later acquisition and how the resulting net-optimal policy affects the gross utility of the final action.
Section~\ref{sec:discussion-extension} extends the baseline model to costly endogenous interpretation, which connects the model back to rationally inattentive processing of disclosed information.

\section{Model}
\label{sec:model}

This section sets up a model of AI advice before costly information acquisition.
Section~\ref{subsec:preliminaries} defines the classical problem of decision making with costly information acquisition.
Section~\ref{subsec:direct-ai-model} introduces AI advice as an exogenous signal before costly information acquisition.
Section~\ref{subsec:no-ai-benchmark} defines the no-AI benchmark and the gross-comparison notions used in the results.

\subsection{Preliminaries: Decision-Making with Costly Information Acquisition}
\label{subsec:preliminaries}

Let \(X\in\X\) be a payoff-relevant state with prior \(\mu\in\DeltaX\), where \(\X\) is finite.
There is a finite set of actions $\A$.
The payoff of action $a \in \A$ in state \(x \in \X\) is \(u(a,x)\).
For any belief \(\nu\in\DeltaX\), the static gross value is
\begin{equation}
    v(\nu)
    =
    \max_{a\in\A}\sum \nolimits_{x\in\X}\nu(x) u(a,x).
\label{eq:static-gross-value}
\end{equation}
The function \(v : \Delta(\X) \to \mathbb{R} \) is convex because it is the maximum of finitely many affine functions.

Before taking an action, the decision maker can conduct an experiment to acquire information.
A statistical experiment is a signal channel $X \to S$ about the payoff-relevant state.
Given a starting belief $\mu$, each signal realization $S = s$ induces a posterior belief $\mu(X|s)$ by Bayes' rule.
Because both payoffs and posterior-separable costs depend on the experiment only through the distribution of the induced posteriors, we use posterior splits as the primitive representation.

\begin{definition}[Posterior split]
\label{def:posterior-split}
For a prior \(\mu\in\DeltaX\), a \emph{posterior split} of \(\mu\) is a distribution over beliefs in \(\DeltaX\) with mean \(\mu\).
The set of all (finite) posterior splits is denoted by
\begin{equation}
\mathcal E_\X(\mu)=
\left\{
\sum \nolimits_{i\in\mathcal I}\alpha_i\delta_{\nu^i}
~\Big|~
|\mathcal I|<\infty,\
\alpha_i>0,\
\sum \nolimits_{i\in\mathcal I} \alpha_i=1,\
\nu^i\in\DeltaX,\
\sum\nolimits_{i\in \mathcal I}\alpha_i\nu^i=\mu
\right\}.
\label{eq:posterior-split-definition}
\end{equation}
Here \(\delta_{\nu^i}\) is the Dirac distribution at belief \(\nu^i\).
The mean condition is Bayes plausibility.
Zero-weight terms are omitted throughout.
We identify each experiment about \(X\) with the posterior split it induces, and every element of \(\mathcal E_\X(\mu)\) is implementable by such an experiment.
\end{definition}

The information acquisition cost is the reduction in uncertainty from the starting belief to the posterior beliefs generated by the chosen posterior split~\cite{CaplinDean2015,Denti2022}.

\begin{definition}[UPS Acquisition Cost]
\label{def:ups-cost}
An \emph{uncertainty function} is a continuous concave function \(\Phi:\DeltaX\to\R\).
The \emph{uniformly posterior-separable (UPS)} acquisition cost generated by \(\Phi\) is
\begin{equation}
    C^\Phi(\mu;\eta)
    =
    \Phi(\mu)-\E_{\nu\sim\eta}\Phi(\nu),
    \qquad
    \eta\in\mathcal E_\X(\mu).
\label{eq:ups-acquisition-cost}
\end{equation}
\end{definition}

The cost in Eq.~\eqref{eq:ups-acquisition-cost} is nonnegative by concavity of \(\Phi\).
The canonical uncertainty function is Shannon entropy,
\[
    H_\X(\mu)
    =
    -\sum \nolimits_{x\in\X}\mu(x)\log \mu(x),
    \qquad 0\log0:=0.
\]
When the state space is clear, we write \(H\).
For any experiment implementing posterior split \(\eta\), the Shannon cost \(C^{H_\X}(\mu;\eta)\) equals the mutual information between the state and the experiment signal.

We use concavification to represent the value of posterior-split problems.

\begin{definition}[Concavification]
\label{def:concavification}
For a continuous function \(f:\DeltaX\to\mathbb R\), its concavification is \(\cav f(\mu)=\sup_{\eta\in\mathcal E_\X(\mu)}\E_{\nu\sim\eta}f(\nu)\).
Equivalently, \(\cav f\) is the smallest concave majorant of \(f\) on \(\DeltaX\).
\end{definition}

The decision maker chooses a posterior split to maximize expected gross utility minus acquisition cost.
Under UPS costs, this problem reduces to concavification in posterior-split geometry~\cite{KamenicaGentzkow2011}.

\begin{lemma}[Concavification of Acquisition]
\label{lem:acquisition-cav}
For any belief \(\mu\in\DeltaX\), the continuation net value is
\begin{equation}
    W_\Phi(\mu)
    =
    \sup_{\eta\in\mathcal E_\X(\mu)}
    \left\{\E_{\nu\sim\eta}v(\nu)-C^\Phi(\mu;\eta)\right\}
    =
    \cav(v+\Phi)(\mu)-\Phi(\mu).
\label{eq:continuation-net-value}
\end{equation}
Moreover, a net-optimal acquisition split exists.
\end{lemma}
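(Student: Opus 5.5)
The identity follows by rewriting the objective: for any \(\eta\in\mathcal E_\X(\mu)\), Eq.~\eqref{eq:ups-acquisition-cost} gives
\[
\E_{\nu\sim\eta}v(\nu)-C^\Phi(\mu;\eta)=\E_{\nu\sim\eta}\bigl[v(\nu)+\Phi(\nu)\bigr]-\Phi(\mu).
\]
The term \(\Phi(\mu)\) does not depend on \(\eta\). Taking the supremum over \(\mathcal E_\X(\mu)\) and applying Definition~\ref{def:concavification} to \(f=v+\Phi\) then yields \(W_\Phi(\mu)=\cav(v+\Phi)(\mu)-\Phi(\mu)\). The function \(f\) is continuous because \(v\) is a finite maximum of affine functions and \(\Phi\) is continuous by assumption, so Definition~\ref{def:concavification} applies. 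Since \(\Phi\) is bounded on the compact simplex, the value is finite.

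For existence, the supremum defining \(\cav f(\mu)\) must be attained by a finite split. I would use the Carathéodory representation of the concave envelope on the compact convex set \(\DeltaX\), which has dimension \(d=|\X|-1\). The hypograph of \(\cav f\) is the closed convex hull of the hypograph of \(f\). Because \(f\) is continuous on a compact set, the set \(\{(\nu,f(\nu))\}\) is compact. Its convex hull is therefore compact by Carathéodory's theorem in \(\R^{d+1}\), since it is the continuous image of a compact set. Hence no closure is needed, and \(\cav f(\mu)=\max\{\sum_i\alpha_i f(\nu^i)\}\) over splits with at most \(d+2\) support points.

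Concretely, I would take a maximizing sequence of splits. By Carathéodory, each can be reduced to at most \(|\X|+1\) atoms without lowering the objective, since that reduction acts on points of the graph of \(f\). Each reduced split is a vector \((\alpha_i,\nu^i)_{i\le |\X|+1}\) in a compact set, so I would extract a convergent subsequence. Bayes plausibility is preserved in the limit because it is a linear constraint. The objective \(\sum_i\alpha_i f(\nu^i)\) is preserved by continuity of \(f\), so the limit attains the supremum. Atoms whose weight converges to zero are then dropped, which gives an element of \(\mathcal E_\X(\mu)\).

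The main obstacle is this compactness step: the reduction to a bounded number of atoms, and the handling of vanishing weights in the limit. The first part is pure algebra and is routine.
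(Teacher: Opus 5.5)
Your proposal is correct and follows essentially the same route as the paper's proof: the same rewriting of the UPS objective, then a Carath\'eodory reduction to at most \(|\X|+1\) posteriors, then a compactness argument (allowing zero weights) to obtain a split that attains the supremum. The paper's proof also applies Carath\'eodory to the joint moments \((\nu,(v+\Phi)(\nu),v(\nu))\) to show that the gross-favorable maximum among net optimizers is attained; this is needed to define \(G_\Phi\) but is not part of the lemma as stated.
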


The proof is in Appendix~\ref{app:proof-acquisition}.
The lemma represents costly acquisition as the choice of a Bayes-plausible posterior split and identifies the optimal split through concavification.
The optimization problem in Eq.~\eqref{eq:continuation-net-value} can have multiple net-optimal posterior splits with different gross utilities.
Among net-optimal splits, we select one with the largest gross utility.
We denote this selected continuation gross value and its acquisition cost by \(G_\Phi(\mu)\) and \(K_\Phi(\mu)\), respectively.
The gross-favorable selection rule is conservative: if gross utility falls under this convention, the loss is not an artifact of pessimistic tie-breaking among net-optimal acquisition policies.

\subsection{AI-Assisted Decision-Making}
\label{subsec:direct-ai-model}

Now fix the original prior \(\mu_0\in\DeltaX\).
We formalize AI advice as an external signal \(Y\in\Y\) generated from a known conditional distribution \(P_{Y|X}\).
Before taking action, the decision maker observes \(Y\).
The AI signal induces the posterior split \( \eta^Y = \sum_{y\in \Y} \alpha_y\delta_{\mu_y} \in\mathcal E_\X(\mu_0), \) where \(\alpha_y=P(Y=y)\) is the marginal probability of realization \(y\) and \(\mu_y=P(X\mid Y=y)\) is the posterior belief after that realization.

Conditioning on realization \(y\), the decision maker starts from belief \(\mu_y\), acquires more information by choosing a posterior split in \(\mathcal E_\X(\mu_y)\) at a cost, and then acts optimally at each acquisition-induced posterior belief.
Averaging over AI signal realizations gives the AI-assisted gross utility, net value, and acquisition cost,
\[
G^{\mathrm{AI}}_\Phi(Y,\mu_0)
=
\E_{y\sim Y}G_\Phi(\mu_y),\qquad
W^{\mathrm{AI}}_\Phi(Y,\mu_0)
=
\E_{y\sim Y}W_\Phi(\mu_y),\qquad
K^{\mathrm{AI}}_\Phi(Y,\mu_0)
=
\E_{y\sim Y}K_\Phi(\mu_y).
\]
The gross-comparison results below use \(G^{\mathrm{AI}}_\Phi(Y,\mu_0)\) as the outcome variable.

\subsection{No-AI Benchmark and Gross Comparisons}
\label{subsec:no-ai-benchmark}

The no-AI benchmark is the acquisition problem starting from the original prior $\mu_0$.
Its gross value is \(G_\Phi(\mu_0)\).
For gross comparisons, define
\begin{equation}
    \Delta G_\Phi(Y,\mu_0)
    =
    G^{\mathrm{AI}}_\Phi(Y,\mu_0)-G_\Phi(\mu_0).
\label{eq:delta-gross}
\end{equation}

\begin{definition}[Gross-Improving and Gross-Harmful AI Signals]
\label{def:gross-improvement-harm}
Fix a payoff function and an uncertainty function \(\Phi\).
Given a prior \(\mu_0\), an AI signal \(Y\) is \emph{weakly gross-improving} at \(\mu_0\) if \(\Delta G_\Phi(Y,\mu_0)\ge 0\), and is \emph{gross-harmful} at \(\mu_0\) if \(\Delta G_\Phi(Y,\mu_0)<0\).
\end{definition}

The rest of the paper characterizes when AI advice is weakly gross-improving and when it generates gross harm.

\section{Characterization for General Decision Problems}
\label{sec:mechanism}
\label{sec:state-geometry}

We begin with the characterization for general decision problems.
The argument proceeds from the general criterion to its state-space implications.
Section~\ref{subsec:gross-criterion} gives the criterion.
Section~\ref{subsec:binary-gross-improvement} applies it to binary-state problems, where one-dimensional belief geometry rules out gross harm.
Section~\ref{subsec:treatment} gives a decision problem with three states where an AI signal can be gross-harmful.

\subsection{Posterior-Geometry Criterion}
\label{subsec:gross-criterion}

AI advice induces a Bayes-plausible posterior split of the prior.
We show that its effect on gross utility is the Jensen gap of \(G_\Phi\) under that split.

\begin{theorem}[Posterior Geometry of AI Advice]
\label{thm:posterior-geometry}
Fix a payoff function and a UPS acquisition cost function \(C^\Phi\).
\begin{enumerate}[label=\textup{(\roman*)}, leftmargin=1.8em]
\item If the continuation gross value function \(G_\Phi\) is convex on \(\DeltaX\), every AI signal is weakly gross-improving at every prior.
\item If the continuation gross value function \(G_\Phi\) is not convex on \(\DeltaX\), there exist a prior and a binary AI signal that generates gross harm.
\end{enumerate}
\end{theorem}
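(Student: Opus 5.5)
The proof rests on one identity. For any AI signal \(Y\) inducing the split \(\eta^Y=\sum_y\alpha_y\delta_{\mu_y}\in\mathcal E_\X(\mu_0)\), the definitions give
\[
\Delta G_\Phi(Y,\mu_0)=\sum\nolimits_{y}\alpha_y G_\Phi(\mu_y)-G_\Phi\Bigl(\sum\nolimits_y\alpha_y\mu_y\Bigr).
\]
This is exactly the Jensen gap of \(G_\Phi\) under the finite split \(\eta^Y\). Here \(G_\Phi\) is well defined at every belief: Lemma~\ref{lem:acquisition-cav} guarantees that net-optimal splits exist, and I will take the gross-favorable selection to be attained. If attainment is delicate, I will simply use the supremum, since nothing below depends on attainment. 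The two parts of Theorem~\ref{thm:posterior-geometry} are then the two directions of the finite-point characterization of convexity.

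For part (i), suppose \(G_\Phi\) is convex on the simplex \(\DeltaX\). Finite Jensen's inequality, obtained by induction on the two-point convexity inequality, gives \(\sum_y\alpha_yG_\Phi(\mu_y)\ge G_\Phi(\mu_0)\) for every finite Bayes-plausible split. Hence \(\Delta G_\Phi(Y,\mu_0)\ge0\) for every signal \(Y\) and every prior \(\mu_0\). Since \(\Y\) is implicitly finite (signal realizations index a finite split), no measure-theoretic issues arise.

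For part (ii), nonconvexity means the two-point convexity inequality fails somewhere. That is, there exist \(\nu^1,\nu^2\in\DeltaX\) and \(\lambda\in(0,1)\) with
\[
G_\Phi(\lambda\nu^1+(1-\lambda)\nu^2)>\lambda G_\Phi(\nu^1)+(1-\lambda)G_\Phi(\nu^2).
\]
We may assume \(\nu^1\neq\nu^2\); otherwise the two sides are equal and the inequality cannot be strict. Set \(\mu_0=\lambda\nu^1+(1-\lambda)\nu^2\). I then construct a binary signal \(Y\in\{1,2\}\) with \(P(Y=1)=\lambda\) and \(P(X=x\mid Y=i)=\nu^i(x)\). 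The likelihoods are \(P_{Y|X}(i\mid x)=\alpha_i\nu^i(x)/\mu_0(x)\) on the support of \(\mu_0\) and arbitrary off it. These likelihoods sum to one over \(i\) because \(\mu_0\) is the mixture of the \(\nu^i\). Also, \(\nu^i(x)=0\) whenever \(\mu_0(x)=0\), so Bayes' rule returns exactly \(\nu^i\) as the posteriors. This is the standard fact, already asserted in Definition~\ref{def:posterior-split}, that every split is implementable. The displayed identity then gives \(\Delta G_\Phi(Y,\mu_0)<0\), so \(Y\) is gross-harmful at \(\mu_0\).

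I expect no real obstacle, since the argument is essentially definitional. The only points needing care are:
\begin{itemize}
\item checking that \(G_\Phi\) is a genuine real-valued function defined everywhere on \(\DeltaX\), which follows from existence and boundedness of net-optimal splits by continuity and compactness;
\item the support and zero-probability details in implementing the binary split.
\end{itemize}
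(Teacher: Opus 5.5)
Your proof is correct and matches the paper's argument: part (i) is finite Jensen applied to the Bayes-plausible split \(\sum_y\alpha_y\mu_y=\mu_0\), and part (ii) takes a strict two-point Jensen violation and implements it with the same binary channel \(P(Y=1\mid X=x)=t\mu^1_x/\mu_{0,x}\). Your remarks on excluding \(\nu^1=\nu^2\) and on zero-prior states only make explicit details the paper leaves implicit.
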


The proof is in Appendix~\ref{app:proof-posterior-geometry}.
Bayes plausibility gives \(\sum_y\alpha_y\mu_y=\mu_0\).
If \(G_\Phi\) is convex, Jensen's inequality makes the gross-utility change nonnegative for every posterior split.
If \(G_\Phi\) is nonconvex, a strict Jensen violation gives two posterior beliefs whose average is a prior, and that pair of posteriors can be induced by a binary AI signal.\footnote{The binary signal in part~\textup{(ii)} is only a minimal witness: any AI signal generates gross harm when its induced posterior split has a negative Jensen gap for \(G_\Phi\).}
Economically, the theorem reduces the effect of AI advice to the geometry of continuation gross value.
Convexity means that moving the decision maker to AI-induced posterior branches cannot lower the average gross utility selected by net-optimal acquisition, while a strict Jensen loss identifies a split at which AI advice sends the decision maker to branches whose optimized acquisition policies deliver less gross utility than the policy selected at the average belief.
We then apply this criterion to state-space geometry.

\subsection{Binary State: Always Weakly Gross-Improving}
\label{subsec:binary-gross-improvement}

Binary-state decision problems are geometrically special because all posterior beliefs lie on a line.
We show that this one-dimensional geometry is enough to make every AI signal weakly gross-improving, regardless of the payoff function or the UPS acquisition cost.

\begin{figure}[t]
\centering
\includegraphics[width=.82\linewidth]{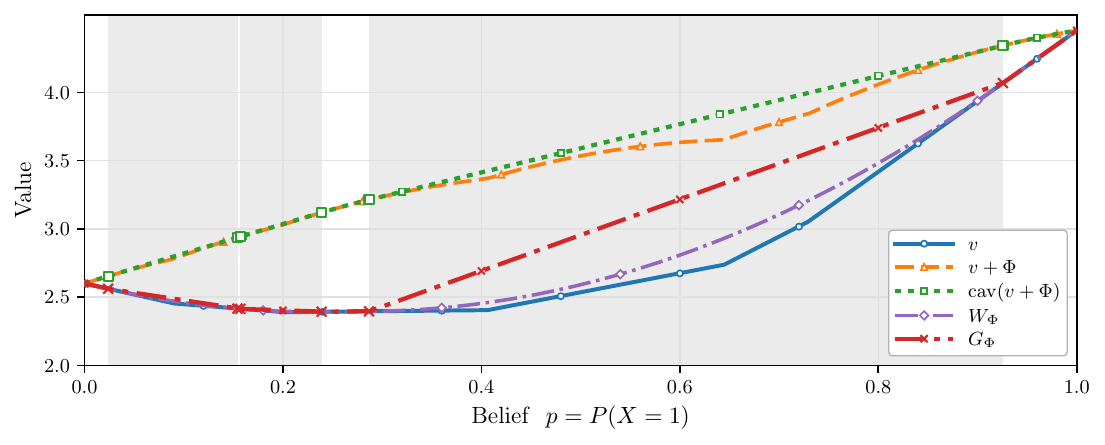}
\caption{Binary-state chord geometry with six actions and uncertainty function \(\Phi(p)=4p(1-p)\).
The shaded gray region marks beliefs where chord replacement occurs.}
\label{fig:binary-gini-gross-improvement}
\end{figure}

\begin{theorem}[Binary-State Problems Are Always Weakly Gross-Improving]
\label{thm:binary-ups}
Suppose \(|\X|=2\).
For any finite action set \(\A\), payoff function \(u:\A\times\X\to\mathbb R\), and UPS acquisition cost function \(C^\Phi\), every AI signal is weakly gross-improving at every prior.
\end{theorem}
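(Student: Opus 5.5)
By Theorem~\ref{thm:posterior-geometry}(i), it suffices to show that \(G_\Phi\) is convex on \(\DeltaX\) when \(|\X|=2\). I identify \(\DeltaX\) with \([0,1]\) through \(p=\nu(x_1)\) and set \(f=v+\Phi\), which is continuous. By Lemma~\ref{lem:acquisition-cav}, a split \(\eta\in\mathcal E_\X(p)\) is net-optimal iff \(\E_{\nu\sim\eta}f(\nu)=\cav f(p)\).

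\textbf{Step 1: net-optimal splits live on one affine piece.} First I describe the net-optimal splits at \(p\). Let \(I(p)=[a(p),b(p)]\) be the largest closed interval containing \(p\) on which \(\cav f\) is affine. Call this affine function \(L\). If \(\cav f\) is not affine on any nondegenerate interval around \(p\), set \(I(p)=\{p\}\).

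Since \(f\le\cav f\) everywhere, \(\E_\eta f=\cav f(p)\) forces two things. Every support point \(q\) of \(\eta\) must satisfy \(f(q)=\cav f(q)\). Moreover \(\cav f\) must be affine on the convex hull of the support. Hence \(\eta\) is supported on the contact set \(T(p)=\{q\in I(p): f(q)=L(q)\}\).

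The endpoints \(a(p),b(p)\) belong to \(T(p)\). This follows from maximality of \(I(p)\) and continuity of \(f\): an endpoint where \(f<\cav f\) would let the affine piece extend further.

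\textbf{Step 2: gross-favorable selection picks the extreme chord.} For \(\eta\) supported on \(T(p)\),
\[\E_\eta v=\E_\eta(L-\Phi)=L(p)-\E_\eta\Phi.\]
Maximizing gross utility is therefore the same as minimizing \(\E_\eta\Phi\) over distributions on \(T(p)\) with mean \(p\). Every such distribution is a mean-preserving contraction of the two-point split on \(\{a(p),b(p)\}\) with mean \(p\). By concavity of \(\Phi\), that two-point split minimizes \(\E\Phi\).

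Consequently \(G_\Phi(p)\) is the chord of \(v\) between \(a(p)\) and \(b(p)\), evaluated at \(p\). In particular \(G_\Phi(p)=v(p)\) whenever \(I(p)=\{p\}\) or \(p\) is an endpoint of \(I(p)\).

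The maximal affine pieces with nonempty interior are at most countably many, and their interiors \((a_k,b_k)\) are disjoint. So \(G_\Phi\) is obtained from \(v\) by replacing \(v\) on each \((a_k,b_k)\) with its chord. Because the chord agrees with \(v\) at \(a_k\) and \(b_k\), \(G_\Phi\) is continuous.

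\textbf{Step 3: chord replacement preserves convexity.} Let \(s_k\) be the slope of the \(k\)-th chord. Since \(v\) is convex,
\[v'_-(a_k)\le v'_+(a_k)\le s_k\le v'_-(b_k)\le v'_+(b_k).\]
So replacing \(v\) by its chord on one interval keeps the one-sided slopes nondecreasing, and the result is convex. The chord also lies weakly above \(v\).

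For countably many intervals, I would argue directly that the right-derivative of \(G_\Phi\) is nondecreasing. On each replaced interval this right-derivative equals \(s_k\); elsewhere it equals \(v'_+\). Monotonicity then follows from the displayed inequalities together with the monotonicity of \(v'_+\). Comparing points in different intervals, or a replaced interval with an untouched point, uses the chain of inequalities through the intervals lying between them. A continuous function on \([0,1]\) with nondecreasing right-derivative is convex. An alternative is to write \(G_\Phi\) as the limit of the functions obtained after finitely many replacements and use that pointwise limits of convex functions are convex.

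\textbf{Main obstacle.} I expect Steps 1 and 2 to be the delicate part. One must show rigorously that net-optimality forces support on the contact set of a single affine piece, handling the case where \(p\) is itself a contact point lying inside a flat region. One must also check that the gross-favorable selection genuinely attains the extreme chord at \(a(p),b(p)\). With Step 2 in hand, the convexity in Step 3 is a one-dimensional slope argument, and Theorem~\ref{thm:posterior-geometry}(i) concludes the proof.
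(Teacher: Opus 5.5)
Your proposal is correct and takes essentially the same route as the paper. You reduce to convexity of \(G_\Phi\) via Theorem~\ref{thm:posterior-geometry}(i), show that net-optimal splits lie on contact points of a maximal affine face of \(\cav(v+\Phi)\), use concavity of \(\Phi\) to show that the gross-favorable rule selects the endpoint split (so \(G_\Phi\) is \(v\) with endpoint chords substituted on those faces), and conclude by secant-slope monotonicity of the convex function \(v\). Your extra care about countably many replaced intervals is harmless but unnecessary: a replacement changes \(v\) only if its interior contains one of the finitely many kinks of \(v\), so only finitely many replacements are nontrivial.
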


Boyac{\i}, Canyakmaz, and de V\'ericourt~\cite{BoyaciCanyakmazDeVericourt2024} show that AI advice does not harm gross utility in specific binary-state, binary-action problems.
Our result extends this conclusion to arbitrary binary-state problems.
The proof is in Appendix~\ref{app:proof-binary-ups}.
Fig.~\ref{fig:binary-gini-gross-improvement} illustrates the argument.
Beliefs are indexed by \(p=P(X=1)\).
The net acquisition problem is equivalent to concavifying \(v+\Phi\) and then subtracting \(\Phi\).
On an affine segment of this concavification, a net-optimal acquisition policy mixes between the endpoint contact beliefs.
The selected gross value follows the chord of \(v\) between those endpoints rather than the local segment of \(v\).
The shaded gray region marks the beliefs where chord replacement occurs.
Replacing disjoint intervals of a convex function by endpoint chords preserves convexity.
The same secant-slope argument also covers the minimal-information tie-breaking rule, so the binary-state conclusion does not rely on the gross-favorable selection convention.

The intuition is that, with a binary state space, AI signals cannot redirect acquisition across distinct state comparisons.
An AI signal can make one state more likely or less likely, and it can change how much acquisition the decision maker buys, but all posterior movements remain on the same belief line.
Richer state spaces allow advice to redirect acquisition across different state dimensions.
The next example gives the minimal failure.

\subsection{A Minimal Failure: Three-State Treatment Example}
\label{subsec:treatment}

The binary-state gross-improvement guarantee is tight: with three states, AI advice can be gross-harmful.
When reporting numerical examples below, we use the same AI-minus-benchmark convention for the net-value change and the acquisition-cost change: \(\Delta W_\Phi(Y,\mu_0)=W^{\mathrm{AI}}_\Phi(Y,\mu_0)-W_\Phi(\mu_0)\) and \(\Delta K_\Phi(Y,\mu_0)=K^{\mathrm{AI}}_\Phi(Y,\mu_0)-K_\Phi(\mu_0)\).

\begin{wineexample}[Three-State Treatment Failure]
\label{ex:treatment}
Let \(\X=\{L,M,H\}\) and \(\A=\{0,1\}\), where action \(1\) is treatment and action \(0\) is no treatment.
The prior is
\[
    \mu_0=(0.3,0.6,0.1).
\]
Payoffs are
\[
    u(0,x)=0\quad\text{for all }x,
    \qquad
    u(1,L)=1,
    \quad
    u(1,M)=-3,
    \quad
    u(1,H)=3.
\]
Treatment is valuable in the severe state \(H\), harmful in the contraindicated state \(M\), and mildly valuable in the low-risk state \(L\).
The uncertainty function is \(\Phi=H_\X\).

\emph{No-AI benchmark.}
At the prior, the decision maker acquires diagnostic information before acting.
Both actions are active in the optimal Shannon action prior, with approximate action-prior vector \( (0.77230,0.22770) \) over actions \((0,1)\).
The no-AI continuation values are approximately
\[
    W_\Phi(\mu_0)=0.12041,
    \qquad
    G_\Phi(\mu_0)=0.36409,
    \qquad
    K_\Phi(\mu_0)=0.24368.
\]

\emph{AI-assisted policy.}
A binary AI signal \(Y\in\{h,m\}\) perfectly identifies the severe state:
\[
    P(h\mid H)=1,
    \qquad
    P(h\mid L)=P(h\mid M)=0.
\]
Hence \(P(Y=h)=0.1\), \(P(Y=m)=0.9\), and
\[
    P(X\mid Y=h)=(0,0,1),
    \qquad
    P(X\mid Y=m)=\left(\frac13,\frac23,0\right).
\]
After realization \(h\), the decision maker treats without further acquisition.
After realization \(m\), she does not treat and acquires no further information.
The AI-assisted continuation values are
\[
    W^{\mathrm{AI}}_\Phi(Y,\mu_0)
    =
    G^{\mathrm{AI}}_\Phi(Y,\mu_0)
    =
    0.30000,
    \qquad
    K^{\mathrm{AI}}_\Phi(Y,\mu_0)=0.
\]

\emph{Comparison.}
Relative to the no-AI benchmark,
\[
    \Delta W_\Phi(Y,\mu_0)=0.17959,
    \qquad
    \Delta G_\Phi(Y,\mu_0)=-0.06409,
    \qquad
    \Delta K_\Phi(Y,\mu_0)=-0.24368.
\]
AI advice raises net value while reducing gross utility by approximately \(17.6\%\) relative to the no-AI benchmark.
\end{wineexample}

The failure mechanism is \emph{concentrated shutdown}.
The prior supports valuable diagnostic acquisition, while both advice-induced posteriors lead the decision maker to stop acquiring information.
This mechanism is related to empirical evidence that algorithmic recommendations and explanations can change reliance without improving gross utility~\cite{GreenChen2019,BucincaEtAl2021,PoursabziSangdehEtAl2021}.
The practical implication is to audit coarse recommendations, confidence flags, and explanation summaries for whether they suppress follow-up testing, record review, or verification in cases where such acquisition is decision-relevant.
Appendix~\ref{app:treatment-verification} gives the numerical verification.
The gross-loss margin and no-acquisition optimality conditions are strict.
Together with the local support argument in Appendix~\ref{app:shannon-kkt}, these inequalities imply that the acquisition pattern and loss sign persist under sufficiently small feasible perturbations.

\paragraph{Net utility.}
The net gain in Example~\ref{ex:treatment} reflects a property of Shannon acquisition costs.
Under these costs, Corollary~\ref{cor:shannon-net-benefit} shows that every AI signal weakly raises net utility.
This conclusion need not hold for other uncertainty functions.
Appendix~\ref{app:gini-treatment} gives a Quadratic/Gini example in which AI advice lowers both gross and net utility.

\section{Application: Classification Problems with Shannon Acquisition Costs}
\label{sec:payoff-geometry}

We now apply the general posterior-geometry criterion to classification problems, an economically important family of decision problems that includes medical triage, loan review, hiring screens, content moderation, and routing cases to service queues.
Section~\ref{subsec:classification-slope} states the classification criterion as an active-set slope test.
Section~\ref{subsec:classification-gross-improvement} identifies weakly gross-improving classification cases.
Section~\ref{subsec:classification-failure} gives a gross-harmful weighted-classification example.

\subsection{Classification Gross-Comparison Criterion}
\label{subsec:classification-slope}

Let \(\X=\A=\{1,\ldots,n\}\).
The action is the predicted label and the state is the true label.
Consider the classification payoff
\begin{equation}
    u(a,x)
    =
    b_x+w_x\one\{a=x\},
    \qquad
    w_x>0,
\label{eq:classification-payoff}
\end{equation}
where \(b_x\) is the state baseline payoff and \(w_x\) is the correct-label reward.
In the homogeneous case, all weights are equal.

Under Shannon acquisition with uncertainty function \(\lambda H_\X\), the selected continuation gross value is piecewise affine in beliefs.
For a belief, call a label \emph{active} if it is chosen with positive probability in the optimal Shannon continuation problem.
An active set is the collection of labels that receive acquisition attention at that belief.
The active-set cell for a subset \(J\) is the region of beliefs whose active labels are exactly \(J\).
When a belief crosses from the cell \(J\) to the neighboring cell \(J\cup\{k\}\), label \(k\) is the entering label.
The slope test below translates the Jensen-gap criterion from Theorem~\ref{thm:posterior-geometry} into a boundary condition: an entrant must not lower the gross slope by diluting attention away from labels already active.

\begin{theorem}[Classification Slope Test]
\label{thm:slope-criterion}
Consider the classification payoff in Eq.~\eqref{eq:classification-payoff} under acquisition with Shannon uncertainty function \(\lambda H_\X\), where \(\lambda>0\).
For each label \(i\), set \(E_i=\exp(w_i/\lambda)\).
For every nonempty active set \(J\subseteq\X\) and every label \(k\notin J\), define the adjacent-boundary slope jump \(\Delta_{J,k}\) by
\begin{equation}
\Delta_{J,k}
=
\frac{w_k E_k}{E_k-1}
-
\frac{
\sum_{i\in J}{w_i E_i}/{(E_i-1)^2}
}{
1+\sum_{i\in J}{1}/{(E_i-1)}
}.
\label{eq:slope-expression}
\end{equation}
A boundary is feasible if the active-set cells for \(J\) and \(J\cup\{k\}\) share a nonempty facet.
If every feasible adjacent boundary satisfies \(\Delta_{J,k}\ge0\), every AI signal is weakly gross-improving.
If some feasible adjacent boundary satisfies \(\Delta_{J,k}<0\), the environment admits a gross-harmful AI signal.
\end{theorem}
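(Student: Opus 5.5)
The plan is to reduce the statement to Theorem~\ref{thm:posterior-geometry}. The test only has to decide whether the selected continuation gross value \(G_\Phi\) is convex on \(\DeltaX\).

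\emph{Step 1: \(G_\Phi\) on each cell.} Under Shannon cost \(\lambda H_\X\), Lemma~\ref{lem:acquisition-cav} together with the Mat\v{e}jka--McKay logit representation says the following. At a belief \(\mu\) with active set \(J\), the optimal policy chooses label \(a\) with unconditional probability \(q_a>0\) for \(a\in J\). The conditional choice probabilities are
\[
P(a\mid x)=\frac{q_a e^{u(a,x)/\lambda}}{\sum_{b\in J}q_b e^{u(b,x)/\lambda}}.
\]
With the classification payoff, the baseline \(b_x\) cancels. For \(x\in J\) the denominator is \(\sum_b q_b + q_x(E_x-1)=1+q_x(E_x-1)\), and for \(x\notin J\) it equals \(1\). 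The consistency condition \(\sum_x\mu(x)P(a\mid x)=q_a\) for \(a\in J\) then gives \(\mu(a)E_a/(1+q_a(E_a-1))=1\), so \(q_a=(\mu(a)E_a-1)/(E_a-1)\). This is affine in \(\mu\) on the cell.

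Substituting, the probability of a correct answer in state \(a\in J\) is \(P(a\mid a)=q_aE_a/(1+q_a(E_a-1))=q_a/\mu(a)\). Hence
\[
G_\Phi(\mu)=\sum_x\mu(x)b_x+\sum_{a\in J}w_a q_a ,
\]
which is affine on each cell. This confirms the piecewise-affine claim, and the active labels are exactly those with \(\mu(a)E_a>1\) after the normalization \(\sum q=1\). Because \(\sum_{a\in J}q_a=1\) must hold, the formula needs a Lagrange/normalization correction: \(q_a=(\mu(a)E_a c-1)/(E_a-1)\) with a scalar \(c\) fixed by normalization. That scalar is exactly what produces the denominator \(1+\sum_{i\in J}1/(E_i-1)\) in Eq.~\eqref{eq:slope-expression}. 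I will carry out this computation carefully, including the case where no label is inactive.

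\emph{Step 2: convexity as a boundary condition.} A continuous piecewise-affine function on a polyhedral partition of the simplex is convex if and only if, across every shared facet, the directional derivative in the direction crossing the facet does not drop. The "only if" direction follows from restricting to a segment crossing the facet. The "if" direction follows from local convexity along every segment in general position, which avoids lower-dimensional faces, together with the fact that a continuous function convex along all such segments is convex.

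Continuity of \(G_\Phi\) follows from continuity of \(q\) in \(\mu\), since \(q_k\to0\) at the entry boundary. Under Shannon cost the optimal policy is unique, so the gross-favorable selection plays no role in the interior of cells.

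At a boundary between cells \(J\) and \(J\cup\{k\}\), I will differentiate both affine expressions in a direction that raises \(\mu(k)\) transversally and subtract. On the \(J\cup\{k\}\) side there is a new term \(w_kq_k\), whose slope gives \(w_kE_k/(E_k-1)\). Through the normalization scalar, the entrant also pulls mass from the \(q_i\), \(i\in J\), which gives the subtracted weighted average. So the jump is a positive multiple of \(\Delta_{J,k}\).

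This computation is the main obstacle. I have to check that the jump is independent of the chosen transversal direction up to a positive factor, which holds because both sides are affine and agree on the facet. I also have to get the normalization algebra right so that it matches Eq.~\eqref{eq:slope-expression} exactly.

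\emph{Step 3: conclude.} Suppose every feasible facet has \(\Delta_{J,k}\ge0\). Cells are ordered by inclusion across facets, since crossing a facet changes the active set by one label, so every facet is of this form. Then \(G_\Phi\) is convex, and Theorem~\ref{thm:posterior-geometry}(i) applies.

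Suppose instead some feasible facet has \(\Delta_{J,k}<0\). Take a short segment crossing the relative interior of that facet. Its midpoint \(\mu_0\) lies on the facet, and the two endpoints \(\mu_\pm\) lie in the two cells. There \(\tfrac12G_\Phi(\mu_+)+\tfrac12G_\Phi(\mu_-)<G_\Phi(\mu_0)\), which is a strict Jensen violation. The binary signal inducing the split \(\tfrac12\delta_{\mu_+}+\tfrac12\delta_{\mu_-}\) is therefore gross-harmful, as in Theorem~\ref{thm:posterior-geometry}(ii).
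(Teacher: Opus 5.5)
\textbf{Genuine gap in Step 1.} Your overall route is the paper's, and Steps 2 and 3 are fine as stated: compute the selected gross value as an affine function on each active-set cell from the logit solution, test the gradient jump across entry facets with the piecewise-affine convexity criterion (Lemma~\ref{lem:pwa-convexity}), and in the bad case split symmetrically across the facet and apply Theorem~\ref{thm:posterior-geometry}.

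The gap is Step 1, which carries all of the content of Eq.~\eqref{eq:slope-expression}. As written it fails. Write $L_x=E_x-1$, with $q_x=0$ for inactive $x$.
\begin{itemize}
\item Your consistency equation $\mu(a)E_a/(1+q_aL_a)=1$ drops the off-diagonal choices. For $x\neq a$ one has $P(a\mid x)=q_a/(1+q_xL_x)>0$. So $\sum_x\mu(x)P(a\mid x)=q_a$ actually reads $\mu(a)L_a/(1+q_aL_a)=\tau$, where $\tau=1-\sum_x\mu(x)/(1+q_xL_x)$ is common to all active labels. This is the KKT condition of $\max_q\sum_x\mu(x)\log(1+L_xq_x)$.
\item Consequently $P(a\mid a)=q_a/\mu(a)$, the activity rule $\mu(a)E_a>1$, the claim that $q_a$ is affine, and $G=\sum_x\mu(x)b_x+\sum_{a\in J}w_aq_a$ are all wrong. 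Once $\sum_{a\in J}q_a=1$ is imposed, the last formula gives $G=\sum_x\mu(x)b_x+w$ on every cell in the homogeneous case. That means zero jumps, contradicting the strictly positive jumps $wE/(L+m)$.
\item Your repair $q_a=(\mu(a)E_ac-1)/(E_a-1)$ puts the scalar on $\mu(a)E_a$ instead of $\mu(a)L_a$. This agrees with the true solution only when all $w_i$ are equal. So it cannot reproduce $\Delta_{J,k}$ for weighted classification, which is exactly the case the theorem is about.
\end{itemize}

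\textbf{The correct computation.} Normalization gives $q_a=\mu(a)/\tau-1/L_a$ with $\tau=\mu(J)/N_J$ and $N_J=1+\sum_{i\in J}1/L_i$.
\begin{itemize}
\item The action prior is not affine on a cell, because it contains $\mu(a)/\mu(J)$. What is affine is the correct-label mass: $\mu(a)w_aP(a\mid a)=g_a\mu(a)-\tau w_aE_a/L_a^2$ with $g_a=w_aE_a/L_a$.
\item Summing, $G_J(\mu)=\sum_x\mu(x)b_x+\sum_{i\in J}(g_i-\bar g_J)\mu(i)$, with $\bar g_J=N_J^{-1}\sum_{i\in J}w_iE_i/L_i^2$.
\item Cells are cut out by the linear inequalities $L_i\mu(i)>\mu(J)/N_J$ for $i\in J$ and $L_k\mu(k)\le\mu(J)/N_J$ for $k\notin J$.
\item Subtracting adjacent cells gives
\[
G_{J\cup\{k\}}(\mu)-G_J(\mu)=\frac{N_J\Delta_{J,k}}{N_JL_k+1}\Bigl(L_k\mu(k)-\frac{\mu(J)}{N_J}\Bigr).
\]
The bracket is the facet-defining linear form: it vanishes on the facet and is positive on the $J\cup\{k\}$ side.
\end{itemize}
This identity gives the sign of the gradient jump directly. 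It also settles your concern about the choice of transversal direction. It is what the paper uses, and with it Steps 2 and 3 go through as you describe. The uniqueness of the optimal action prior also needs the classification-specific argument: the dual is strictly concave on labels with $\mu(x)>0$, and zero-probability labels are inactive. It is not a general fact about Shannon costs.
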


The proof is in Appendix~\ref{app:proof-slope-criterion}.
In Eq.~\eqref{eq:slope-expression}, the first term is the entering label's gross marginal contribution.
The second term is the active block's dilution term.
A negative slope jump means that the entrant bends the selected continuation gross value downward as it joins the active set.
High-stake entrants tend to pass the test.
Low-stake labels can fail it by entering an active block and diluting acquisition away from more valuable labels.
The active-set formulas agree on shared feasible facets, so the slope test does not depend on tie-breaking among net-optimal action priors.

\subsection{Weakly Gross-Improving Classification}
\label{subsec:classification-gross-improvement}

The first implication is that equal stakes preserve convexity of the selected continuation gross value.

\begin{proposition}[Homogeneous Classification Is Weakly Gross-Improving]
\label{prop:homogeneous-classification}
For the homogeneous specialization of Eq.~\eqref{eq:classification-payoff}, with \(w_x\equiv w>0\), and Shannon acquisition with uncertainty function \(\lambda H_\X\), where \(\lambda>0\), the continuation gross value is
\begin{equation}
G_{\lambda H_\X}(\mu)
=
\sum_x\mu(x) b_x
+
\max_{\varnothing\ne J\subseteq\X}
\frac{w e^{w/\lambda}}{e^{w/\lambda}-1+|J|}
\sum_{x\in J}\mu(x).
\label{eq:hom-class-g}
\end{equation}
Consequently, every AI signal is weakly gross-improving at every prior.
\end{proposition}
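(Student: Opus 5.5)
The plan is to compute the optimal Shannon policy explicitly on each active-set cell, show that the selected gross value there equals the corresponding affine term in Eq.~\eqref{eq:hom-class-g}, and conclude by Theorem~\ref{thm:posterior-geometry}(i).

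\textbf{Step 1: reduction.} The baseline term \(\sum_x\mu(x)b_x\) is affine in \(\mu\). It enters \(v\) affinely and passes through every Bayes-plausible split unchanged. It therefore contributes exactly \(\sum_x\mu(x)b_x\) to \(G_{\lambda H_\X}\), and we may take \(b\equiv0\). Write \(E=e^{w/\lambda}\).

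\textbf{Step 2: logit policy.} By the logit representation of Mat\v{e}jka and McKay (the Shannon KKT characterization in Appendix~\ref{app:shannon-kkt}), an optimal policy is described by an action prior \(q\in\Delta(\A)\). Conditional choice probabilities are
\[
P(a\mid x)=\frac{q_a E^{\one\{a=x\}}}{\sum_{a'}q_{a'}E^{\one\{a'=x\}}}=\frac{q_aE^{\one\{a=x\}}}{1+(E-1)q_x}.
\]
Optimality requires
\[
\sum_x \mu(x)\frac{E^{\one\{a=x\}}}{1+(E-1)q_x}\le 1,
\]
with equality on the support \(J\) of \(q\).

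\textbf{Step 3: solve on a fixed active set \(J\).} For \(a\in J\), the equality condition reads
\[
1-\sum_{x\in J}\frac{\mu(x)}{1+(E-1)q_x}-\mu(J^c)+\frac{(E-1)\mu(a)}{1+(E-1)q_a}=0.
\]
The constant part \(c=1-\sum_{x\in J}\mu(x)/(1+(E-1)q_x)-\mu(J^c)\) does not depend on \(a\). Hence \(\mu(a)/(1+(E-1)q_a)=t\) is common across \(a\in J\). Solve for \(t\) using \(\sum_{a\in J}q_a=1\), i.e.\ \(\sum_{a\in J}(\mu(a)/t-1)=E-1\), which gives
\[
t=\frac{\mu(J)}{E-1+|J|}.
\]

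\textbf{Step 4: gross value on the cell.} The gross payoff is \(w\sum_{x\in J}\mu(x)P(x\mid x)\), where
\[
P(x\mid x)=\frac{q_xE}{1+(E-1)q_x}.
\]
Substituting \(1+(E-1)q_x=\mu(x)/t\) and \(q_x=(\mu(x)/t-1)/(E-1)\), the gross value becomes
\[
\frac{wE}{E-1}\sum_{x\in J}\bigl(\mu(x)-t\bigr)=\frac{wE}{E-1}\bigl(\mu(J)-|J|t\bigr)=\frac{wE}{E-1+|J|}\,\mu(J).
\]
The logit policy is the unique net optimum in the relevant region: Shannon strict concavity fixes the posteriors given \(q\), and the characterization pins down \(q\). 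So tie-breaking is immaterial and this is \(G_{\lambda H_\X}\) on the cell of \(J\).

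\textbf{Step 5: identify the max.} This is the step I expect to be the main obstacle: showing that the value on the realized cell equals \(\max_J \frac{wE}{E-1+|J|}\mu(J)\).

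First, the optimal \(J\) consists of the labels with the largest \(\mu\) values. This follows from \(q_a>0\) iff \(\mu(a)>t\) and from the inactive inequality \(E\mu(k)/(1+0)+(\text{rest})\le1\), which reduces to \(\mu(k)\le t\).

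Next, write \(f(J)=\mu(J)/(E-1+|J|)\). For \(J\) the top \(m\) labels, adding the next label \(k\) increases \(f\) iff \(\mu(k)>f(J)\). Indeed,
\[
\frac{\mu(J)+\mu(k)}{E+|J|}>\frac{\mu(J)}{E-1+|J|}\iff \mu(k)(E-1+|J|)>\mu(J).
\]
Since the \(\mu\)-sorted sequence is decreasing, \(f\) along top-\(m\) sets is unimodal. Its maximizer is exactly the \(J\) whose inclusion thresholds match the KKT conditions \(\mu(a)>t\ge\mu(k)\), with \(t=f(J)\).

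Finally, for non-top sets, \(f\) is dominated by the top set of the same size. The maximum over all nonempty \(J\) is therefore attained at the active set, which yields Eq.~\eqref{eq:hom-class-g}. I will check the edge case \(|J|=1\) (\(t=\mu(a)/E\), full-information-free corner) separately.

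\textbf{Step 6: conclusion.} \(G_{\lambda H_\X}\) is an affine function plus a maximum of finitely many linear functions, hence convex on \(\DeltaX\). Theorem~\ref{thm:posterior-geometry}(i) then gives weak gross improvement for every signal at every prior. As a cross-check, the same result should follow from Theorem~\ref{thm:slope-criterion}: with equal weights, \(\Delta_{J,k}\ge0\) reduces to a simple inequality in \(E\) and \(|J|\).
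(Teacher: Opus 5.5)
Your proposal is correct. Steps 1--4 follow the paper's computation. The paper also fixes the active set and obtains a common KKT ratio (your \(t\) is the paper's \(\tau/L\) with \(L=E-1\)), then derives the cell value \(wE\mu(J)/(L+|J|)\). In Step 3 the term \((E-1)\mu(a)/(1+(E-1)q_a)\) should enter with a minus sign. This does not affect the conclusion that \(\mu(a)/(1+(E-1)q_a)\) is constant on \(J\).

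The routes differ after that, in two places.

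\emph{Identifying the maximum.} The paper replaces your sorting, unimodality, and top-set domination argument with one inequality. Active labels have \(\mu(x)>t\) and inactive labels have \(\mu(x)\le t\), so every nonempty \(T\) satisfies \(\mu(T)-|T|t\le\sum_{x\in T\cap J}(\mu(x)-t)\le\sum_{x\in J}(\mu(x)-t)=(E-1)t\). Hence \(\mu(T)/(E-1+|T|)\le t\), with equality at \(T=J\). This handles all \(T\) at once. If you keep your version, supply the missing line behind ``unimodal'': \(f(J_{m+1})\) is the weighted average of \(f(J_m)\) and \(\mu_{(m+1)}\), with weights \(E-1+m\) and \(1\). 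That fact shows both that a non-increase propagates down the sorted list and that the KKT condition \(\mu_{(m)}>f(J_m)\ge\mu_{(m+1)}\) locates the peak. Under ties the maximizer need not be unique; only the value matters. The edge case \(|J|=1\) needs no separate treatment, since the general formula already gives \(q=e_a\) and gross value \(w\mu(a)\).

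\emph{The final claim.} The paper invokes the slope test of Theorem~\ref{thm:slope-criterion}, computing \(\Delta_{J,k}=wE/(L+|J|)>0\) at every adjacent boundary. You instead read convexity directly off Eq.~\eqref{eq:hom-class-g}, as an affine function plus a maximum of finitely many linear functions, and apply Theorem~\ref{thm:posterior-geometry}(i). Your ending is shorter and self-contained. It avoids the polyhedral-subdivision and facet-feasibility machinery, and it matches the paper's own remark that the formula determines \(G_{\lambda H_\X}\) directly. The paper's ending instead illustrates the slope test on its simplest case, where no entrant can dilute attention away from a more important label.
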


The proof is in Appendix~\ref{app:proof-homogeneous-classification}.
The slope test gives the mechanism.
When all labels have the same stake, every feasible adjacent-boundary slope jump is nonnegative: adding an active label cannot dilute acquisition away from a more important label because no label is more important.
This explains why the homogeneous binary classification environment in Boyac{\i}, Canyakmaz, and de V\'ericourt~\cite{BoyaciCanyakmazDeVericourt2024} ensures decision-quality improvement.
Eq.~\eqref{eq:hom-class-g} determines \(G_{\lambda H_\X}\) directly, so the conclusion does not rely on the gross-favorable selection rule.

Beyond homogeneous stakes, the slope test still rules out gross harm in three-label classification.

\begin{proposition}[Three-State Weighted Classification Is Weakly Gross-Improving]
\label{prop:three-state-gross-improvement}
With \(|\X|=|\A|=3\) and Shannon acquisition, every AI signal is weakly gross-improving at every prior for every classification payoff in Eq.~\eqref{eq:classification-payoff}.
\end{proposition}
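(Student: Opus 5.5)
The plan is to invoke the Classification Slope Test (Theorem~\ref{thm:slope-criterion}). With \(|\X|=3\), it suffices to show that \(\Delta_{J,k}\ge 0\) for \emph{every} nonempty \(J\subsetneq\X\) and every \(k\notin J\), whether or not the boundary is feasible. Then the hypothesis of the first branch of the theorem holds automatically. Only two cases arise: \(|J|=1\) and \(|J|=2\) (in the latter case \(J\cup\{k\}=\X\)).

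First I normalize the expression. Write \(x_i=w_i/\lambda>0\) and \(a_i=1/(E_i-1)=1/(e^{x_i}-1)\), so that \(E_i/(E_i-1)=1+a_i\) and \(E_i/(E_i-1)^2=a_i(1+a_i)\). Dividing Eq.~\eqref{eq:slope-expression} by \(\lambda\) gives
\[
\frac{\Delta_{J,k}}{\lambda}
=
\frac{x_k e^{x_k}}{e^{x_k}-1}
-
\frac{\sum_{i\in J} g(x_i)}{1+\sum_{i\in J}a_i},
\qquad
g(x)=\frac{x e^{x}}{(e^{x}-1)^2}.
\]
The entrant term satisfies \(x e^x/(e^x-1)\ge 1\) for \(x>0\). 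Indeed, this is equivalent to \(e^{-x}\ge 1-x\). So in both cases it suffices to show that the dilution term is at most \(1\), i.e.
\[
\sum_{i\in J}\bigl(g(x_i)-a_i\bigr)\le 1 .
\]

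The key one-variable estimate is a bound on
\[
h(x)=g(x)-\frac{1}{e^x-1}=\frac{x e^x-e^x+1}{(e^x-1)^2}.
\]
I claim \(h(x)\le \tfrac12\) for all \(x>0\), with equality only in the limit \(x\to0\). Clearing denominators, the inequality \(2(xe^x-e^x+1)\le e^{2x}-2e^x+1\) reduces to \(2xe^x\le e^{2x}-1\). This is the same as \(x\le \sinh x\), which holds for \(x\ge0\).

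With this bound, the two cases follow.
\begin{itemize}
\item For \(|J|=1\), \(\sum_{i\in J} h(x_i)\le \tfrac12<1\). Alternatively, the dilution term is simply \(x_i/(e^{x_i}-1)\le 1\).
\item For \(|J|=2\), \(\sum_{i\in J} h(x_i)\le \tfrac12+\tfrac12=1\).
\end{itemize}
Hence \(\Delta_{J,k}\ge0\) at every adjacent boundary, and Theorem~\ref{thm:slope-criterion} yields weak gross improvement at every prior.

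The main obstacle I expect is the \(|J|=2\) case. The crude bound \(g(x_i)\le 1+a_i\), which is the analogue of the \(|J|=1\) argument, only gives a numerator bound of \(2+a_i+a_j\) and fails. The point is to subtract the denominator contributions \(a_i\) label by label and use the sharp constant \(\tfrac12\) from the \(\sinh\) inequality. This also explains why the argument stops at three states: with \(|J|=3\), the sum \(\sum_{i\in J}h(x_i)\) can approach \(\tfrac32\), which is consistent with the four-state failure announced in the introduction.
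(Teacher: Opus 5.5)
Your proposal is correct and follows essentially the same route as the paper. Both arguments apply the slope test, show that the entrant term exceeds \(\lambda\) and that the active-block dilution term is at most \(\lambda\), and settle the \(|J|=2\) case with the per-label bound \(\psi(s)=\frac{(s-1)e^s+1}{(e^s-1)^2}<\tfrac12\), which is equivalent to \(\sinh s>s\). The only difference is cosmetic: the paper keeps the inequalities strict, so every slope jump is strictly positive, while your weak version already suffices for the slope test.
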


Appendix~\ref{app:proof-three-state-classification} proves the proposition by applying the slope test to all feasible adjacent boundaries.
With three labels, an entrant joins an active set containing at most two labels.
In either case, the active-block dilution term is below the acquisition multiplier, whereas the entrant's gross contribution is above it.
All feasible adjacent-boundary slope jumps are positive.
The first weighted-classification failure requires at least four labels, where such dilution can occur.

\subsection{Weighted Classification Gross Harm Example}
\label{subsec:classification-failure}

Weighted classification can fail through attention dilution.
A low-stake category can become salient enough after advice to enter the active set, and its entry can reduce acquisition devoted to higher-stake distinctions.
The next example adds one low-stake routine category to three high-stake categories and verifies a strict Jensen loss.

\begin{wineexample}[Four-Class Triage Dilution]
\label{ex:weighted-four-state}
Consider a triage classifier with three high-consequence categories and one lower-consequence routine category.
Let \(\X=\A=\{1,2,3,4\}\).
The uncertainty function is \(\Phi=100H_\X\).
Payoffs are
\[
    u(a,x)=w_x\one\{a=x\},
    \qquad
    (w_1,w_2,w_3,w_4)=(30,30,30,5).
\]
Correct classification carries high stakes in the first three categories.
The fourth is a lower-value category, such as a routine follow-up class, a low-priority queue, or a benign category.
The prior is
\[
    \mu_0=
    \left({19}/{150},{19}/{150},{19}/{150},{31}/{50}\right).
\]

\emph{No-AI benchmark.}
At the prior, the no-AI continuation values are approximately
\[
    W_\Phi(\mu_0)=4.19162,
    \qquad
    G_\Phi(\mu_0)=4.59374,
    \qquad
    K_\Phi(\mu_0)=0.40212.
\]
The optimal Shannon active set is \(\{1,2,3\}\), with action-prior vector \((1/3,1/3,1/3,0)\).
Thus the no-AI benchmark concentrates acquisition on the three high-stake classes and leaves the low-stake routine class inactive.

\emph{AI-assisted policy.}
A binary AI signal \(Y\in\{\ell,r\}\) has probabilities \(P(Y=\ell)=4/5\) and \(P(Y=r)=1/5\), with posterior beliefs
\[
\mu^\ell=
\left({1}/{10},{1}/{10},{1}/{10},{7}/{10}\right),
\qquad
\mu^r=
\left({7}/{30},{7}/{30},{7}/{30},{3}/{10}\right).
\]
Their probability-weighted average is the prior.
On the routine-heavy branch \(\ell\), all four classes are active and the action prior shifts heavily toward the low-stake routine class.
On branch \(r\), the active set remains \(\{1,2,3\}\).
The AI-assisted values are approximately
\[
    W^{\mathrm{AI}}_\Phi(Y,\mu_0)=4.34931,
    \qquad
    G^{\mathrm{AI}}_\Phi(Y,\mu_0)=4.56707,
    \qquad
    K^{\mathrm{AI}}_\Phi(Y,\mu_0)=0.21776.
\]

\emph{Comparison.}
Relative to the no-AI benchmark,
\[
    \Delta W_\Phi(Y,\mu_0)=0.15768,
    \qquad
    \Delta G_\Phi(Y,\mu_0)=-0.02668,
    \qquad
    \Delta K_\Phi(Y,\mu_0)=-0.18436.
\]
Advice strictly raises net value and generates gross harm.
It saves acquisition cost by shifting a high-probability posterior toward a low-stake routine category, while the induced attention dilution lowers gross classification utility.
\end{wineexample}

Appendix~\ref{app:triage-verification} gives the numerical verification.
The failure mechanism is attention dilution.
A low-consequence routine category enters the active set and weakens acquisition for consequential categories.
The endogenous acquisition response is related to evidence
that AI predictions can crowd out human effort
\cite{AgarwalMoehringWolitzky2025}.
The allocation of acquisition across distinctions has a conceptual connection to multitask incentive problems~\cite{HolmstromMilgrom1991,Baker1992}.
The practical guidance is to audit classwise accuracy together with how advice changes follow-up review across high-stake and routine categories.
The strict loss and optimality conditions in Appendix~\ref{app:triage-verification} also establish local robustness by the support argument in Appendix~\ref{app:shannon-kkt}.

\section{Discussion and Extension}
\label{sec:discussion-extension}

The previous sections characterize weak gross improvement and gross harm for fixed, exogenous AI advice.
This section records two implications that matter for design.
Section~\ref{subsec:informativeness} studies comparative statics in the informativeness of AI advice.
Section~\ref{subsec:costly-interpretation-extension} extends to endogenous interpretation of AI advice and then studies comparative statics in interpretation cost.

\begin{figure}[t]
\centering
\begin{minipage}[t]{.48\linewidth}
\centering
\includegraphics[width=\linewidth]{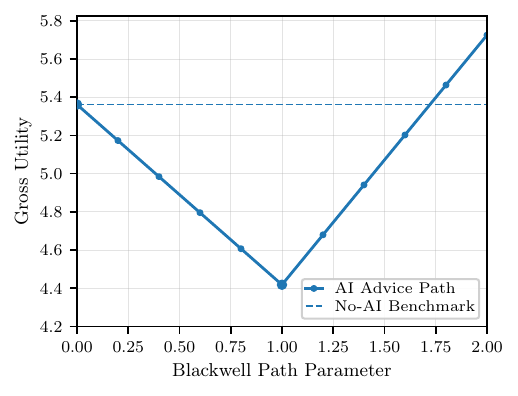}
\caption{Gross utility along a Blackwell-increasing AI advice path in the safe-option environment.
Settings: \(\X=\A=\{1,2,3\}\), action \(2\) is safe, and the payoff matrix has rows \((6,3,-12)\), \((1,3,1)\), and \((-12,3,6)\), with states in rows and actions in columns.
The uncertainty function is \(\Phi=4H_\X\).}
\label{fig:informativeness-gross}
\end{minipage}\hfill
\begin{minipage}[t]{.48\linewidth}
\centering
\includegraphics[width=\linewidth]{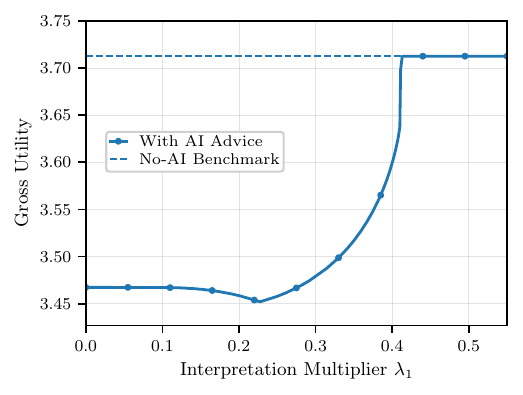}
\caption{Gross utility with endogenous interpretation and Shannon costs in the distance-payoff environment.
Settings: \(\X=\A=\{1,2,3\}\), \(u(a,x)=4-(x-a)^2\), \(\Phi=1.5H_\X\), \(\Psi=H_\Y\), and interpretation uncertainty function \(\lambda_1\Psi\).}
\label{fig:costly-interpretation-gross}
\end{minipage}
\end{figure}

\subsection{Signal Informativeness}
\label{subsec:informativeness}

More informative advice is usually expected to help because it gives the decision maker a finer posterior belief before action.
However, gross utility can fall, as the next result shows.

\begin{proposition}[Gross Utility First Decreases and Then Increases]
\label{prop:informativeness-nonmonotonicity}
Fix a payoff function, a UPS uncertainty function \(\Phi\), and a prior \(\mu_0\).
If a gross-harmful AI signal \(Y\) exists at \(\mu_0\), then there exists a continuous family of AI signals, strictly increasing in the Blackwell order, along which gross utility first strictly decreases and then strictly increases.
\end{proposition}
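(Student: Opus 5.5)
The plan is to build the path explicitly out of two \emph{mixture legs} with one parameter \(\tau\in[0,2]\). The first leg runs from the uninformative signal to the harmful signal \(Y\). The second runs from \(Y\) to full revelation of \(X\). The key observation is that, for a mixture signal, the induced posterior split is a mixture of posterior splits. Since \(G^{\mathrm{AI}}_\Phi\) is just the expectation of \(G_\Phi\) under the induced split, gross utility is then \emph{affine} in the mixing weight on each leg. Monotonicity on each leg therefore reduces to comparing the two endpoint values.

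\emph{Leg 1 (\(\tau=t\in[0,1]\)).} Let \(Y_t\) output \(Y\) with probability \(t\) and a null symbol \(\varnothing\) with probability \(1-t\), with the coin independent of \(X\) and \(Y\). The null realization leaves the prior unchanged, so the induced split is \((1-t)\delta_{\mu_0}+t\sum_y\alpha_y\delta_{\mu_y}\). Hence
\[
G^{\mathrm{AI}}_\Phi(Y_t,\mu_0)=(1-t)G_\Phi(\mu_0)+t\,G^{\mathrm{AI}}_\Phi(Y,\mu_0),
\]
which is strictly decreasing in \(t\) because \(\Delta G_\Phi(Y,\mu_0)<0\).

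For \(t<t'\), \(Y_t\) is a garbling of \(Y_{t'}\): replace each non-null output by \(\varnothing\) with probability \((t'-t)/t'\). The order is strict because Blackwell-equivalent signals induce the same distribution of posteriors. Here the induced splits differ, since the mass at \(\mu_0\) changes with \(t\). This uses that \(Y\) is not uninformative, which holds because an uninformative signal has \(\Delta G_\Phi=0\).

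\emph{Leg 2 (\(\tau=1+s\), \(s\in[0,1]\)).} Let \(Z_s=(Y,R_s)\), where \(R_s\) equals \(X\) with probability \(s\) and \(\varnothing\) otherwise, with an independent coin. Then \(Z_0\sim Y\), and \(Z_s\) is a garbling of \(Z_{s'}\) for \(s<s'\) by the same thinning argument. The split of \(Z_s\) is \((1-s)\sum_y\alpha_y\delta_{\mu_y}+s\sum_x\mu_0(x)\delta_{\delta_x}\), so
\[
G^{\mathrm{AI}}_\Phi(Z_s,\mu_0)=(1-s)G^{\mathrm{AI}}_\Phi(Y,\mu_0)+s\sum_x\mu_0(x)v(\delta_x).
\]
This uses \(G_\Phi(\delta_x)=v(\delta_x)\): the only split of a degenerate belief is trivial, so no acquisition occurs there.

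For strict increase, I will show that \(G_\Phi(\nu)\le\sum_x\nu(x)v(\delta_x)\) for every \(\nu\). Indeed, \(G_\Phi(\nu)\) is \(\E_{\nu'\sim\eta}v(\nu')\) for some split \(\eta\) of \(\nu\). Convexity of \(v\) gives \(v(\nu')\le\sum_x\nu'(x)v(\delta_x)\), and averaging over \(\eta\) yields the bound. Applying it at \(\nu=\mu_0\) gives
\[
G^{\mathrm{AI}}_\Phi(Y,\mu_0)<G_\Phi(\mu_0)\le\sum_x\mu_0(x)v(\delta_x),
\]
so the slope on leg 2 is strictly positive. The strict gap also shows that the split of \(Y\) differs from full revelation. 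Hence the splits along leg 2 are pairwise distinct, and the Blackwell order along it is strict.

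Finally, I concatenate the legs. The endpoints agree at \(\tau=1\), and each leg is continuous in its parameter (in the sense of the induced posterior distributions), so the whole family is continuous. It is Blackwell-increasing throughout, because every leg-1 signal is a garbling of \(Y\), and \(Y\) is a garbling of every leg-2 signal. Gross utility strictly decreases on \([0,1]\) and strictly increases on \([1,2]\).

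The main point needing care is strictness of the Blackwell order. I would handle it, as above, through the fact that Blackwell equivalence forces identical posterior distributions. Everything else is linearity of \(G^{\mathrm{AI}}_\Phi\) in the induced split, combined with the convexity bound.
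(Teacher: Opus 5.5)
Your proposal is correct and follows essentially the same approach as the paper. Both use the same two mixture legs: from no information to \(Y\), then from \(Y\) to full revelation. Both rely on gross utility being affine in the mixing weight on each leg, on the endpoint ordering \(G^{\mathrm{AI}}_\Phi(Y,\mu_0)<G_\Phi(\mu_0)\le\sum_x\mu_0(x)v(e_x)\), and on the fact that Blackwell-equivalent signals induce the same posterior distribution, which gives strictness of the Blackwell order.

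The differences are cosmetic. You append the revelation to \(Y\) instead of mixing it exclusively with \(Y\), so the leg-2 garbling becomes simple thinning rather than the paper's regeneration of \(Y\) from the revealed state. You also spell out the convexity bound behind \(G_\Phi(\mu_0)\le\sum_x\mu_0(x)v(e_x)\), which the paper states in one line.
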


The construction follows two mixture paths.
Starting from no information, it reveals the harmful signal with increasing probability, so gross utility moves linearly toward the lower value generated by that signal.
It then replaces the harmful signal with full revelation with increasing probability, so gross utility moves linearly toward the statewise optimal value.
Each step increases informativeness in the Blackwell order.
Appendix~\ref{app:proof-informativeness-nonmonotonicity} gives the formal construction and proof.
Fig.~\ref{fig:informativeness-gross} illustrates this two-segment path in the safe-option environment.

This result complements other work showing that informativeness alone need not determine the value of algorithmic assistance.
More informative features can trade off short-run prediction accuracy against human learning~\cite{NotiEtAl2025}, while maximal informativeness can be suboptimal when delegation interacts with private information and preference misalignment~\cite{Xu2024AlgorithmAssisted}.
Here the channel is subsequent costly acquisition.
More informative AI advice changes the belief from which acquisition is chosen, so gross utility need not increase monotonically with informativeness.
For practical design, improving informativeness alone is insufficient.
One must also evaluate how the AI signal changes follow-up acquisition and how the induced final action affects gross utility.

\subsection{Extension: Endogenous and Costly Interpretation}
\label{subsec:costly-interpretation-extension}

In the baseline model, the decision maker observes the AI signal directly.
A natural extension lets the decision maker endogenously interpret AI advice before acquisition.
Let \(\bar q=P_Y\in\DeltaY\) be the marginal distribution of the AI signal, and let \(\Psi:\DeltaY\to\R\) be a concave uncertainty function for interpretation.
For a belief \(q\in\DeltaY\) over AI signal realizations, define the Bayes map
\begin{equation}
    \rho(q)
    =
    \sum_{y\in\Y}q_y\mu_y
    \in\DeltaX,
\label{eq:extension-bayes-map}
\end{equation}
where \(\mu_y\) is the posterior belief after AI signal \(y\).
An interpretation policy is a posterior split \(\zeta\in\mathcal E_\Y(\bar q)\).
With interpretation multiplier \(\lambda_1\ge0\), the optimal value is
\begin{equation}
W^{\mathrm{AI}}_{\lambda_1\Psi,\Phi}(Y,\mu_0)
=
\sup_{\zeta\in\mathcal E_\Y(\bar q)}
\left\{
    \E_{q\sim\zeta}W_\Phi(\rho(q))
    -
    C^{\lambda_1\Psi}(\bar q;\zeta)
\right\}
=
\cav\big(W_\Phi\circ\rho+\lambda_1\Psi\big)(\bar q)
-
\lambda_1\Psi(\bar q).
\label{eq:extension-value}
\end{equation}
The extended gross utility \(G^{\mathrm{AI}}_{\lambda_1\Psi,\Phi}(Y,\mu_0)\) is the largest expected continuation gross value among interpretation policies that attain Eq.~\eqref{eq:extension-value}.

The first question is whether endogenous interpretation eliminates gross harm.
The answer is no.
Even when interpretation is completely flexible and costless, the decision maker may interpret AI advice because it improves the continuation net objective by saving acquisition cost.
After interpretation, the induced reduction in self-acquisition can nevertheless lower the gross utility of the final action.
Shannon acquisition costs provide an implementation of this mechanism.

\begin{theorem}[Endogenous Interpretation Does Not Eliminate Gross Harm]
\label{thm:shannon-nonconvexity-harm}
Let \(\Phi=\lambda H_{\X}\) for some \(\lambda>0\).
If \(G_\Phi\) is not convex on \(\DeltaX\), then there exist a prior \(\mu_0\in\DeltaX\) and a binary AI signal \(Y\) such that full interpretation is the unique net-optimal interpretation posterior split at zero interpretation cost, and the interpreted AI signal generates strict gross harm: \( G^{\mathrm{AI}}_{0,\Phi}(Y,\mu_0) < G_\Phi(\mu_0). \)
\end{theorem}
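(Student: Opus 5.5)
The plan is to reduce the statement to Theorem~\ref{thm:posterior-geometry}(ii) and then show that, at zero interpretation cost, full interpretation is strictly optimal. Take the prior \(\mu_0\) and the binary signal \(Y\in\{a,b\}\) from the nonconvexity witness, with posteriors \(\mu_a,\mu_b\), weights \(\alpha_a,\alpha_b\), and \(\alpha_a G_\Phi(\mu_a)+\alpha_b G_\Phi(\mu_b)<G_\Phi(\mu_0)\). Since \(|\Y|=2\), \(\DeltaY\) is a segment. The Bayes map \(\rho\) in Eq.~\eqref{eq:extension-bayes-map} is affine and maps \(\DeltaY\) onto the segment \([\mu_a,\mu_b]\subseteq\DeltaX\). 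At \(\lambda_1=0\), Eq.~\eqref{eq:extension-value} gives value \(\cav(W_\Phi\circ\rho)(\bar q)\). Full interpretation corresponds to the split of \(\bar q\) onto the vertices \(e_a,e_b\). It attains the chord value \(\alpha_aW_\Phi(\mu_a)+\alpha_bW_\Phi(\mu_b)\), and its gross value equals \(G^{\mathrm{AI}}_\Phi(Y,\mu_0)<G_\Phi(\mu_0)\). So it suffices to show that \(W_\Phi\circ\rho\) lies strictly below this chord at every interior point of the segment. Then any split placing positive mass on an interior \(q\) is strictly net-worse, full interpretation is the unique net-optimal interpretation, and \(G^{\mathrm{AI}}_{0,\Phi}(Y,\mu_0)=G^{\mathrm{AI}}_\Phi(Y,\mu_0)<G_\Phi(\mu_0)\).

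Under Shannon costs \(W_\Phi\) is convex; this is the content behind Corollary~\ref{cor:shannon-net-benefit}. The Matějka--McKay representation gives \(W_\Phi(\mu)=\max_{p\in\Delta(\A)}\lambda\sum_x\mu(x)\log\sum_a p(a)e^{u(a,x)/\lambda}\), a maximum of affine functions of \(\mu\). A convex function on a segment that touches its chord at one interior point is affine on the whole segment. Hence strict chord inequality fails only if \(W_\Phi\) is affine on \([\mu_a,\mu_b]\). The core step is therefore to show that we may choose the witness so that \(W_\Phi\) is \emph{not} affine on \([\mu_a,\mu_b]\).

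For this I will argue by contrapositive. Suppose \(W_\Phi\) is affine on the segment. Then a single action prior \(p^*\) is optimal at every point of the segment. Concretely, take \(p^*\) optimal at an interior point, so that \(\ell_{p^*}\), the affine function of \(\mu\) indexed by \(p^*\) inside the max, is a supporting affine function there. Because \(W_\Phi\) is affine on the segment and \(\ell_{p^*}\le W_\Phi\) with equality at an interior point, the two coincide on the segment. For a fixed action prior, the induced logit policy \(P(a\mid x)\propto p^*(a)e^{u(a,x)/\lambda}\) yields gross utility \(\sum_x\mu(x)\sum_a P(a\mid x)u(a,x)\), which is linear in \(\mu\). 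Since \(G_\Phi\) selects the gross-maximal net-optimal policy, it is the upper envelope, over action priors optimal at \(\mu\), of these linear functions. On a segment where \(W_\Phi\) is affine, the set of optimal action priors is the same at every interior point: any prior optimal at one interior point is optimal throughout by the argument just given. So \(G_\Phi\) restricted to the segment is a maximum of linear functions over a fixed set, hence convex there. That contradicts the Jensen violation on the segment.

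The main obstacle is this last step: handling the selection in \(G_\Phi\) and possible non-uniqueness of action priors at the endpoints, where the optimal set can be larger than at interior points. I would first deal with it by shrinking the witness. Take a strict Jensen violation and restrict to a slightly shorter subsegment around \(\mu_0\) with endpoints in the relative interior of the original segment. Continuity of \(G_\Phi\) on the relevant cells, which are piecewise affine under Shannon costs, should preserve the strict violation. With the endpoints interior to the segment, the convexity contradiction is clean. If continuity of \(G_\Phi\) fails at cell boundaries, the fallback is to pick the witness inside the interior of two adjacent active-set cells.
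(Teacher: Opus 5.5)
Your proposal is correct and follows essentially the paper's route. Net Jensen equality (which, by convexity of $W_\Phi$, is the same as $W_\Phi$ being affine on the segment) forces an action prior $r^*$ optimal at $\mu_0$ to be optimal at both endpoints. The affine gross functional $A(r^*,\cdot)$ then contradicts the strict gross Jensen loss, and convexity of $F(s)=W_\Phi(s\mu^1+(1-s)\mu^2)$ yields the strict chord gap that makes full interpretation the unique zero-cost optimizer.

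The endpoint issue you flag is not a real obstacle, so you need neither the shrinking step nor continuity of $G_\Phi$, which in general is only upper semicontinuous. Since $r^*\in R(\mu^j)$, you have $A(r^*,\mu^j)\le G_\Phi(\mu^j)$, which is exactly the direction you need. This is the paper's chain $G_\Phi(\mu_0)=A(r^*,\mu_0)=tA(r^*,\mu^1)+(1-t)A(r^*,\mu^2)\le tG_\Phi(\mu^1)+(1-t)G_\Phi(\mu^2)$.
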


Theorem~\ref{thm:shannon-nonconvexity-harm} implements the gross-harmful posterior split from Theorem~\ref{thm:posterior-geometry} in the endogenous interpretation problem.
Under Shannon costs, this split also gives a strict net gain, which makes full interpretation uniquely net-optimal at zero interpretation cost.
The conclusion of gross harm follows from the same posterior split because the displaced acquisition is valuable for the final action.
Appendix~\ref{app:proof-endogenous-interpretation} gives the proof.

Having established that endogenous interpretation does not eliminate gross harm, we next ask how interpretation cost changes the selected gross outcome.
A natural intuition is that easier-to-interpret AI advice should improve decision quality, because the decision maker can extract more information at a lower private cost.
The next result shows that this intuition can fail for gross utility.

\begin{proposition}[Cheaper Interpretation Can Lower Gross Utility]
\label{prop:cheaper-interpretation-harms}
There exist a payoff function, a prior \(\mu_0\), an AI signal \(Y\), an acquisition uncertainty function \(\Phi\), an interpretation uncertainty function \(\Psi\), and interpretation-cost multipliers \(0\le \lambda_1'<\lambda_1''\) such that
\[
    G^{\mathrm{AI}}_{\lambda_1'\Psi,\Phi}(Y,\mu_0)
    <
    G^{\mathrm{AI}}_{\lambda_1''\Psi,\Phi}(Y,\mu_0).
\]
\end{proposition}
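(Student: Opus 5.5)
The plan is to prove the proposition by pairing the two extreme regimes of interpretation cost. At zero cost, Theorem~\ref{thm:shannon-nonconvexity-harm} already gives strict gross harm. At a sufficiently large cost, I will show that no interpretation is uniquely optimal, which returns the no-AI gross value. Throughout, take Shannon acquisition \(\Phi=\lambda H_\X\) and a payoff for which \(G_\Phi\) is nonconvex; for instance, Example~\ref{ex:treatment} is gross-harmful, so \(G_\Phi\) is nonconvex there by Theorem~\ref{thm:posterior-geometry}(i). Take \(\Psi=H_\Y\) and set \(\lambda_1'=0\).

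\textbf{Cheap end.} Theorem~\ref{thm:shannon-nonconvexity-harm} supplies a prior \(\mu_0\) and a binary signal \(Y\) such that full interpretation is the unique net-optimal interpretation split at zero cost. The interpreted signal then yields \(G^{\mathrm{AI}}_{0,\Phi}(Y,\mu_0)<G_\Phi(\mu_0)\).

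\textbf{Expensive end.} I will show that for all sufficiently large \(\lambda_1''\), the degenerate split \(\delta_{\bar q}\) is the unique maximizer in Eq.~\eqref{eq:extension-value}. Since \(\rho(\bar q)=\mu_0\), this gives \(G^{\mathrm{AI}}_{\lambda_1''\Psi,\Phi}(Y,\mu_0)=G_\Phi(\mu_0)\), which combined with the cheap end proves the proposition.

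The argument works along the one-dimensional segment \(q\mapsto\rho(q)\) joining the two posteriors \(\mu_h,\mu_m\). Because \(\cav(v+\Phi)\) is concave, it lies below a supporting line at \(\mu_0\). Hence
\[
W_\Phi(\rho(q))-W_\Phi(\mu_0)\le \ell(q-\bar q)+\lambda\big(H_\X(\mu_0)-H_\X(\rho(q))\big)
\]
for some linear \(\ell\). Suppose the segment lies in the interior of \(\DeltaX\). Then \(-H_\X\circ\rho\) has second derivative bounded by some \(M<\infty\), so for any \(\zeta\in\mathcal E_\Y(\bar q)\) the expected acquisition gain is at most \((\lambda M/2)\,\E_\zeta|q-\bar q|^2\). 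Strict concavity of \(H_\Y\) on the binary simplex makes the interpretation cost at least \((\lambda_1'' m/2)\,\E_\zeta|q-\bar q|^2\), where \(m>0\) (indeed \(m\ge4\)). For \(\lambda_1''>\lambda M/m\), every nondegenerate split therefore has strictly lower net value than \(\delta_{\bar q}\), so \(\delta_{\bar q}\) is the unique optimum. Under uniqueness the gross-favorable selection rule plays no role.

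\textbf{Main obstacle: the interiority requirement.} The witness from Theorem~\ref{thm:shannon-nonconvexity-harm} may have posteriors on the boundary, as in Example~\ref{ex:treatment}, where \(\mu_h=(0,0,1)\). There the curvature of \(H_\X\) blows up and the quadratic bound fails. My first attempt is to perturb the witness toward the interior, replacing \(\mu_y\) by \((1-\varepsilon)\mu_y+\varepsilon\mu_0\) with the same weights. I would then check that the strict gross loss and the strict uniqueness of full interpretation survive. The Shannon local-support argument of Appendix~\ref{app:shannon-kkt} should help here, since Example~\ref{ex:treatment} has strict no-acquisition margins and a strict gross loss. If this perturbation is delicate, the fallback is a direct numerical verification in the distance-payoff environment of Fig.~\ref{fig:costly-interpretation-gross}, with \(\Phi=1.5H_\X\) and \(\Psi=H_\Y\). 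In that environment I would exhibit two multipliers at which the selected gross values are ordered as claimed, certifying the optimal interpretation splits via the concavification conditions in one dimension.
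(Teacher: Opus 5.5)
Your proposal is correct. Your cheap end is the paper's (Theorem~\ref{thm:shannon-nonconvexity-harm} at \(\lambda_1'=0\)); your expensive end takes a genuinely different route.

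\textbf{How the routes differ.} The paper takes \(\Psi=H_\Y\) and any \(\lambda_1''>\lambda\), and argues globally by data processing. An interpret-then-acquire policy \(Y\to Z\), \(S\mid(X,Z)\) can be replicated by acquiring \((Z,S)\) directly from \(X\) at cost \(\lambda I(X;Z)+\lambda I(X;S\mid Z)\). Since \(I(X;Z)\le I(Y;Z)\), every informative interpretation costs at least \((\lambda_1''-\lambda)I(Y;Z)>0\) more than a feasible no-interpretation policy. This needs no regularity of the witness and gives an explicit threshold.

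You instead compare second-order terms: bounded curvature of \(H_\X\) along the segment against the modulus-\(4\) strong concavity of \(H_\Y\). That comparison is what creates your interiority obstacle and a witness-dependent threshold \(\lambda M/4\). In exchange, your expensive-end step uses only strong concavity of \(\Psi\) and bounded curvature of \(\Phi\) along the segment, so it does not depend on Shannon's data-processing structure.

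\textbf{The perturbation works.} Write it out rather than falling back on the numerical distance-payoff instance; grid computations are not a proof, as the paper itself notes. Three checks suffice.
\begin{itemize}
\item Upper semicontinuity of \(G_\Phi\), shown in the robustness part of the proof of Theorem~\ref{thm:shannon-nonconvexity-harm}, preserves the strict gross loss when each \(\mu^j\) is replaced by \((1-\varepsilon)\mu^j+\varepsilon\mu_0\).
\item Rerunning that proof on the perturbed triple restores uniqueness of full interpretation at zero cost, because the proof uses only a strict gross Jensen loss.
\item Every coordinate with \(\mu^1_x\neq\mu^2_x\) has \(\mu_0(x)>0\). So on the perturbed segment \(\rho(q)_x\ge\varepsilon\mu_0(x)>0\) on exactly the coordinates entering \(-(H_\X\circ\rho)''(q)=\sum_x(\mu^1_x-\mu^2_x)^2/\rho(q)_x\), which gives \(M<\infty\).
\end{itemize}

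\textbf{A simpler fix.} Your own first inequality already removes the obstacle. Let \(Z\) be the interpretation signal inducing \(\zeta\). Averaging the inequality over \(\zeta\) bounds the acquisition gain by \(\lambda\bigl(H_\X(\mu_0)-\E_\zeta H_\X(\rho(q))\bigr)=\lambda I(X;Z)\le\lambda I(Y;Z)\). The interpretation cost is \(\lambda_1'' I(Y;Z)\), so any \(\lambda_1''>\lambda\) makes \(\delta_{\bar q}\) the unique optimum. This recovers the paper's threshold with no perturbation at all.
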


Appendix~\ref{app:proof-cheaper-interpretation} proves the result by comparing any two-stage policy with direct acquisition of its joint signal.
With Shannon costs at both stages and \(\lambda_1>\lambda\), the data processing inequality makes direct acquisition strictly cheaper than any policy that extracts information from the AI signal before acquiring further information.
The optimal policy then leaves the AI signal uninterpreted and recovers the no-AI continuation benchmark, while Theorem~\ref{thm:shannon-nonconvexity-harm} gives strict gross harm at zero interpretation cost.
Fig.~\ref{fig:costly-interpretation-gross} illustrates a richer pattern at intermediate interpretation costs.
When interpretation is cheap, the interpreted advice can suppress follow-up acquisition and reduce gross utility.
As interpretation becomes more costly, less information is extracted from the advice, but follow-up acquisition may remain suppressed, leaving both information channels weakened.
Once the interpreted advice becomes too weak to discourage further acquisition, the decision maker resumes acquisition and gross utility returns toward the no-AI benchmark.
Appendix~\ref{app:distance-interpretation} records this numerical path.

\section{Conclusion}
\label{sec:conclusion}

The paper uses a rational inattention model to characterize when AI advice improves or harms final decision quality.
After receiving advice, a Bayesian decision maker updates her belief and then chooses costly acquisition to maximize net utility.
We evaluate the gross utility of the final action before acquisition costs are subtracted.
Advice can lower final decision quality by changing which acquisition policy is optimal for the net objective.

The results have three parts.
First, the continuation gross value function gives a general criterion.
Convexity is necessary and sufficient for every AI-induced posterior split to be weakly gross-improving.
Nonconvexity implies that some binary AI signal generates gross harm.
Second, the criterion gives a tight state-space boundary for general decision problems.
In every binary-state environment, every AI signal is weakly gross-improving under arbitrary action sets, payoffs, and UPS costs.
A minimal three-state example generates concentrated shutdown.
Third, applying the criterion to classification under Shannon acquisition costs gives an active-set slope test.
The test proves that homogeneous classification and all three-state weighted classification problems are weakly gross-improving, and it identifies a minimal four-state failure driven by attention dilution.

The practical implication is that AI evaluation should track acquisition behavior alongside final choices.
Deployment audits should record follow-up testing, record review, verification, and search, especially when these activities carry high gross value.
More informative AI advice is not a sufficient design target, since stronger belief movements can weaken valuable downstream acquisition before the AI signal is precise enough to replace it.
The endogenous-interpretation analysis gives the same warning for interpretability.
Lower interpretation cost can make advice easier to use while also reinforcing belief movements that reduce high-value acquisition.

Several questions remain.
One direction is to identify primitive conditions for weak gross improvement in classification problems beyond Shannon acquisition costs.
Another is endogenous AI signal design.
In the baseline model, if the set of designable AI signals is unrestricted, full revelation allows the decision maker to take the statewise optimal action without further acquisition, maximizing both gross and net utility.
The design problem becomes nontrivial when the feasible set of AI signals is restricted.
Empirically, the theory points to measuring how AI advice changes human information acquisition as well as final actions.
Such evidence would help distinguish behavioral misuse from rational reductions or redirections of acquisition induced by AI advice.

\bibliographystyle{splncs04}
\bibliography{ref}

\newpage
\appendix
\renewcommand{\theHsection}{appendix.\arabic{section}}
\renewcommand{\theHsubsection}{appendix.\arabic{section}.\arabic{subsection}}
\renewcommand{\theHsubsubsection}{appendix.\arabic{section}.\arabic{subsection}.\arabic{subsubsection}}

The appendix has two parts.
Appendix~\ref{app:main-proofs} gives analytical proofs.
Appendix~\ref{app:numerical-verifications} records the numerical verifications and illustrations.

\input{appendix_A_proofs_main_text}
\input{appendix_B_numerical_verifications}

\end{document}

%% file: appendix_A_proofs_main_text.tex
\section{Proofs for Main Results}
\label{app:main-proofs}

This appendix proves the results stated in the main text.

\subsection{Concavification of Acquisition}
\label{app:proof-acquisition}

\begin{proof}[of Lemma~\ref{lem:acquisition-cav}]
For any posterior split \(\eta\in\mathcal E_\X(\mu)\), the net objective in Eq.~\eqref{eq:continuation-net-value} can be written as
\[
    \E_{\nu\sim\eta}v(\nu)-\Phi(\mu)+\E_{\nu\sim\eta}\Phi(\nu)
    =
    \E_{\nu\sim\eta}(v+\Phi)(\nu)-\Phi(\mu).
\]
Taking the supremum over posterior splits gives
\[
    W_\Phi(\mu)
    =
    \cav(v+\Phi)(\mu)-\Phi(\mu).
\]
Since the state space is finite and the function \(v+\Phi\) is continuous, Carath\'eodory's theorem allows the supremum to be taken over splits with support size at most \(|\X|+1\).
The feasible set of such weighted supports is compact after allowing zero weights, so the supremum is attained.
The gross-favorable maximum is attained as well.
Indeed, apply Carath\'eodory's theorem to the joint moments \((\nu,(v+\Phi)(\nu),v(\nu))\).
Any feasible net and gross pair can be represented by at most \(|\X|+2\) posterior beliefs.
Compactness of this representation gives attainment of the largest gross payoff among net optimizers.
\end{proof}

\subsection{Posterior-Geometry Criterion}
\label{app:proof-posterior-geometry}

\begin{proof}[of Theorem~\ref{thm:posterior-geometry}]
First suppose the continuation gross value \(G_\Phi\) is convex.
For any AI signal, its posterior split gives
\[
    \sum_{y\in\Y}\alpha_y\mu_y=\mu_0.
\]
Jensen's inequality gives
\[
    G^{\mathrm{AI}}_\Phi(Y,\mu_0)
    =
    \sum_{y\in\Y}\alpha_yG_\Phi(\mu_y)
    \ge
    G_\Phi\left(\sum_{y\in\Y}\alpha_y\mu_y\right)
    =
    G_\Phi(\mu_0).
\]
Thus every AI signal is weakly gross-improving.

Now suppose the continuation gross value is not convex.
Then there exist beliefs \(\mu^1,\mu^2\in\DeltaX\) and a number \(t\in(0,1)\) such that, for \(\mu_0=t\mu^1+(1-t)\mu^2\),
\[
    tG_\Phi(\mu^1)+(1-t)G_\Phi(\mu^2)
    <
    G_\Phi(\mu_0).
\]
The two-point split
\[
    t\delta_{\mu^1}+(1-t)\delta_{\mu^2}
\]
is Bayes plausible at \(\mu_0\), and is implementable by a binary AI signal.
For completeness, one implementing channel is
\[
    P(Y=1\mid X=x)
    =
    \frac{t\mu^1_x}{\mu_{0,x}}
\]
whenever \(\mu_{0,x}>0\), with arbitrary values on states that have zero prior probability.
The inequality \(0\le t\mu^1_x\le\mu_{0,x}\) makes this a valid channel.
The gross utility of this signal is the left side of the strict inequality above, so the signal generates gross harm.
\end{proof}

\subsection{Shannon Advice and Net Utility}
\label{app:proof-shannon-net-benefit}

\begin{corollary}[Shannon Advice Is Net-Beneficial]
\label{cor:shannon-net-benefit}
Suppose the uncertainty function is \(\Phi=\lambda H_\X\) with \(\lambda>0\).
Every AI signal weakly raises optimized net utility.
If such a signal generates gross harm, the paid acquisition cost falls.
\end{corollary}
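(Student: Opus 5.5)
The plan is to show that, under Shannon costs, the continuation net value $W_\Phi$ is convex on $\DeltaX$; Jensen's inequality then yields $W^{\mathrm{AI}}_\Phi(Y,\mu_0)\ge W_\Phi(\mu_0)$ in the same way as part~(i) of Theorem~\ref{thm:posterior-geometry}. Convexity follows from a two-stage replication argument. Fix a posterior split $\sum_y\alpha_y\delta_{\mu_y}$ of $\mu_0$, and let $\eta^0\in\mathcal E_\X(\mu_0)$ be net-optimal at $\mu_0$, which exists by Lemma~\ref{lem:acquisition-cav}. Implement $\eta^0$ by a channel $P_{S\mid X}$. After observing $y$, the decision maker may use the same channel $P_{S\mid X}$ (it does not depend on $Y$) and act optimally at each posterior $P(X\mid y,s)$.

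The key steps are as follows.

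\textbf{Gross comparison.} The expected gross value of this replicated policy, averaged over $y$, is $\E\, v(P(X\mid Y,S))$. Since $(Y,S)$ is a Blackwell refinement of $S$ and $v$ is convex, this is at least $\E\, v(P(X\mid S))=\E_{\nu\sim\eta^0}v(\nu)$.

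\textbf{Cost comparison.} With $\Phi=\lambda H_\X$, the expected cost of the replicated policy is $\lambda\, I(X;S\mid Y)$, while the benchmark cost is $\lambda\, I(X;S)$. Because $S$ depends on $Y$ only through $X$, the Markov chain $Y\to X\to S$ holds. The chain rule gives
\[
I(X;S)=I(X,Y;S)=I(Y;S)+I(X;S\mid Y)\ge I(X;S\mid Y).
\]

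\textbf{Conclusion of the first claim.} Combining the two comparisons,
\[
\sum_y\alpha_yW_\Phi(\mu_y)\ \ge\ \E\, v(P(X\mid Y,S))-\lambda I(X;S\mid Y)\ \ge\ W_\Phi(\mu_0).
\]
The first inequality holds because $W_\Phi(\mu_y)$ is a supremum, so the replicated policy at each $\mu_y$ is only one feasible choice.

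\textbf{Second claim.} Write net value as gross value minus cost, branch by branch: $W^{\mathrm{AI}}_\Phi=G^{\mathrm{AI}}_\Phi-K^{\mathrm{AI}}_\Phi$, and $W_\Phi(\mu_0)=G_\Phi(\mu_0)-K_\Phi(\mu_0)$ under the selected optimizers. Then $\Delta W_\Phi\ge0$ rearranges to $\Delta K_\Phi\le\Delta G_\Phi$. Under gross harm $\Delta G_\Phi<0$, so $\Delta K_\Phi<0$ and the paid cost strictly falls.

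The main obstacle is making the cost identity precise. The posterior split induced by the fixed channel at prior $\mu_y$ has UPS cost $\lambda\big(H(\mu_y)-\E_s H(P(X\mid y,s))\big)$, and I need to verify that this equals $\lambda I(X;S\mid Y=y)$. This uses only the fact that Shannon UPS cost equals mutual information at the branch prior, which the paper states after Definition~\ref{def:ups-cost}. Averaging over $y$ then gives the conditional mutual information. The rest of the argument is routine.

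This is precisely where Shannon entropy matters. For a general $\Phi$, the analogous inequality $\E_y C^\Phi(\mu_y;\cdot)\le C^\Phi(\mu_0;\eta^0)$ can fail, consistent with the Gini counterexample in Appendix~\ref{app:gini-treatment}.
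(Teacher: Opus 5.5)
Your argument is correct, but it reaches the key inequality by a different route from the paper.

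\textbf{The paper's route.} It works on the dual side. By Lemma~\ref{lem:shannon-dual}, $W_{\lambda H_\X}(\mu)=\lambda\max_{r\in\Delta(\A)}\sum_x\mu(x)\log\sum_a r_a\exp(u(a,x)/\lambda)$ is a pointwise maximum of functions affine in $\mu$, hence convex, and Jensen's inequality finishes. The cost claim is then the same rearrangement of $\Delta W_\Phi=\Delta G_\Phi-\Delta K_\Phi$ that you give.

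\textbf{Your route.} You stay in the primal. Rerunning the no-AI net-optimal channel after advice gives weakly more gross value by Blackwell monotonicity of the convex $v$. Its expected cost is $\lambda I(X;S\mid Y)=\lambda\bigl(I(X;S)-I(Y;S)\bigr)$ under the Markov chain $Y - X - S$.

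\textbf{What each buys.} The paper's route is a one-liner once the logit lemma is available. The same affine family $\mu\mapsto L(r,\mu)$ is reused in the proof of Theorem~\ref{thm:shannon-nonconvexity-harm} to convert a strict gross Jensen loss into a strict net Jensen gain, so it meshes with the later arguments.

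Your route needs neither the dual representation nor its KKT verification. It makes the economics explicit: after advice the decision maker can always replicate her no-AI experiment, and Shannon charges her only for what $Y$ has not already revealed. It also isolates the single property of the cost that is used. That property is that the expected cost of a fixed channel weakly falls when the prior is split by a signal conditionally independent of it given $X$. So the argument extends to any cost with that property, and it pinpoints what must break in the Gini example of Appendix~\ref{app:gini-treatment}.

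Finally, your route gives a quantitative bound the paper's proof does not state: $\Delta W_\Phi(Y,\mu_0)\ge\lambda I(Y;S)$ for any net-optimal no-AI experiment $S$. Hence the net gain is strict whenever that experiment is correlated with the advice.
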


\begin{proof}
By Lemma~\ref{lem:shannon-dual}, the Shannon continuation net value can be written as
\[
    W_{\lambda H_\X}(\mu)
    =
    \lambda\max_{r\in\Delta(\A)}
    \sum_x\mu(x)\log\sum_a r_a\exp(u(a,x)/\lambda).
\]
For each action prior \(r\), the displayed objective is affine in the belief \(\mu\).
The maximum of affine functions is convex.
Therefore Jensen's inequality gives
\[
    W^{\mathrm{AI}}_{\lambda H_\X}(Y,\mu_0)
    =
    \sum_y\alpha_yW_{\lambda H_\X}(\mu_y)
    \ge
    W_{\lambda H_\X}(\mu_0).
\]
If the signal generates gross harm, then \(\Delta G_\Phi<0\) and \(\Delta W_\Phi\ge0\).
Since the definitions give \(\Delta W_\Phi=\Delta G_\Phi-\Delta K_\Phi\), the acquisition-cost change satisfies \(\Delta K_\Phi<0\).
\end{proof}

\subsection{Binary-State Protection}
\label{app:proof-binary-ups}

\begin{proof}[of Theorem~\ref{thm:binary-ups}]
Identify the binary belief simplex with \([0,1]\) by writing \(p=P(X=1)\), and write \(\mathcal E(p)\) for the set of finite posterior splits over \(q\in[0,1]\) with mean \(p\).
Let
\[
    F(p)=v(p)+\Phi(p),
    \qquad
    \widehat F(p)=\cav F(p).
\]
Since \(C^\Phi\) is posterior-separable,
\[
\begin{aligned}
    W_\Phi(p)
    &=
    \sup_{\eta\in\mathcal E(p)}
    \left\{
        \E_{q\sim\eta}v(q)
        -
        \left(\Phi(p)-\E_{q\sim\eta}\Phi(q)\right)
    \right\}  \\
    &=
    \sup_{\eta\in\mathcal E(p)}
    \E_{q\sim\eta}\bigl[v(q)+\Phi(q)\bigr]
    -
    \Phi(p)
    =
    \widehat F(p)-\Phi(p).
\end{aligned}
\]
Thus the net-optimal acquisition splits at belief \(p\) are the posterior splits that attain the concavification value \(\widehat F(p)\).

Let \([a,b]\) be a maximal non-singleton affine face of \(\widehat F\).
By minimality of the concave majorant, the endpoints are contact beliefs:
\[
    F(a)=\widehat F(a),
    \qquad
    F(b)=\widehat F(b).
\]
For \(p\in[a,b]\), define the endpoint split
\[
    \eta^{a,b}_p
    =
    \frac{b-p}{b-a}\delta_a
    +
    \frac{p-a}{b-a}\delta_b .
\]
It is Bayes plausible at \(p\), and because \(\widehat F\) is affine on \([a,b]\),
\[
\begin{aligned}
    \E_{\eta^{a,b}_p}F(q)
    &=
    \frac{b-p}{b-a}F(a)
    +
    \frac{p-a}{b-a}F(b)  \\
    &=
    \frac{b-p}{b-a}\widehat F(a)
    +
    \frac{p-a}{b-a}\widehat F(b)
    =
    \widehat F(p).
\end{aligned}
\]
Hence \(\eta^{a,b}_p\) is net-optimal.

Now take any net-optimal split \(\eta\) at an interior belief \(p\in(a,b)\).
Then
\[
    \widehat F(p)
    =
    \E_\eta F(q)
    \le
    \E_\eta \widehat F(q)
    \le
    \widehat F\!\left(\E_\eta q\right)
    =
    \widehat F(p).
\]
Both inequalities must bind.
Thus \(\eta\) is supported on contact beliefs of this affine face.
At an endpoint of a maximal affine face, a nondegenerate split cannot attain Jensen equality across a change in slope.
The selected gross value there is \(v\).
Moreover,
\[
    \E_\eta v(q)
    =
    \E_\eta F(q)-\E_\eta\Phi(q)
    =
    \widehat F(p)-\E_\eta\Phi(q).
\]
Therefore, among net-optimal splits, the gross-favorable rule maximizes selected gross utility by minimizing \(\E_\eta\Phi(q)\).

Because \(\Phi\) is concave, for every \(q\in[a,b]\),
\[
    \Phi(q)
    \ge
    \frac{b-q}{b-a}\Phi(a)
    +
    \frac{q-a}{b-a}\Phi(b).
\]
Taking expectations under any split \(\eta\) with mean \(p\) gives
\[
    \E_\eta\Phi(q)
    \ge
    \frac{b-p}{b-a}\Phi(a)
    +
    \frac{p-a}{b-a}\Phi(b),
\]
with equality under the endpoint split \(\eta^{a,b}_p\).
Hence the gross-favorable selected gross value on \([a,b]\) is
\[
    G_\Phi(p)
    =
    \frac{b-p}{b-a}v(a)
    +
    \frac{p-a}{b-a}v(b).
\]
At beliefs that do not belong to any non-singleton affine face of \(\widehat F\), the binding inequalities above force the net-optimal split to be degenerate, so \(G_\Phi(p)=v(p)\).
Thus \(G_\Phi\) is obtained from the static value \(v\) by replacing each maximal affine face \([a,b]\) of \(\widehat F\) with the endpoint chord of \(v\).

It remains to show that this chord replacement preserves convexity.
The static value \(v\) is convex because it is the maximum of finitely many affine functions.
For a replaced interval \([a,b]\), let
\[
    m_{a,b}
    =
    \frac{v(b)-v(a)}{b-a}
\]
be the slope of the replacement chord.
For any \(z<a<b<w\), convexity of \(v\) implies the secant-slope inequalities
\[
    \frac{v(a)-v(z)}{a-z}
    \le
    m_{a,b}
    \le
    \frac{v(w)-v(b)}{w-b}.
\]
Likewise, if \([a,b]\) and \([c,d]\) are two replaced intervals with \(b\le c\), then
\[
    \frac{v(b)-v(a)}{b-a}
    \le
    \frac{v(d)-v(c)}{d-c}.
\]
These are the standard monotonicity inequalities for secant slopes of a convex function.
They imply that the slopes of \(G_\Phi\) are nondecreasing along the belief line: inside unreplaced regions this follows from convexity of \(v\), inside replaced regions the slope is constant, and at the boundaries the displayed inequalities give the correct ordering.
Hence \(G_\Phi\) is convex on \([0,1]\).

By Theorem~\ref{thm:posterior-geometry}, convexity of \(G_\Phi\) implies that all AI signals are weakly gross-improving at all priors.

The same conclusion holds under the minimal-information tie-breaking rule.
Here minimal information means the smallest acquisition cost among net-optimal splits.
On a face \([a,b]\), net-optimality fixes \(W_\Phi(p)\), so minimizing acquisition cost is equivalent to minimizing gross utility.
Let
\[
    \mathcal C_{a,b}
    =
    \{q\in[a,b]:F(q)=\widehat F(q)\}
\]
be the contact set on that face.
The minimal-gross selected value is
\[
    \underline G_\Phi(p)
    =
    \inf_{\eta\in\mathcal E(p),\ \supp(\eta)\subseteq\mathcal C_{a,b}}
    \E_{q\sim\eta}v(q),
    \qquad p\in[a,b].
\]
In one dimension, this is the lower convex envelope of \(v\) restricted to the ordered contact set \(\mathcal C_{a,b}\).
It is formed by secant chords between contact beliefs, and those chord slopes are nondecreasing because \(v\) is convex.
The same slope-pasting argument shows that \(\underline G_\Phi\) is convex.
Thus the binary-state weak gross-improvement conclusion does not rely on the gross-favorable selection convention.
\end{proof}

\subsection{Tools for Shannon Acquisition}
\label{app:shannon-tools}

The following lemmas give the Shannon logit representation and the piecewise-affine convexity criterion used in the classification proofs.

\subsubsection{Shannon Logit Representation}
\label{app:shannon-logit}

\begin{lemma}[Shannon Logit Representation]
\label{lem:shannon-dual}
Under Shannon acquisition cost and $\lambda>0$,
\begin{equation}
W_{\lambda H_\X}(\mu)=\lambda\max_{r\in\Delta(\A)}
\sum_{x\in\X}\mu(x)\log\sum_{a\in\A}r_a\exp(u(a,x)/\lambda).
\label{eq:shannon-dual}
\end{equation}
\end{lemma}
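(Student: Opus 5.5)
The plan is to derive Eq.~\eqref{eq:shannon-dual} from Lemma~\ref{lem:acquisition-cav} in its primal form, by rewriting the Shannon acquisition problem as a choice of a stochastic action rule and then solving it through the standard variational (Gibbs) representation of mutual information. First, we reduce to action-recommendation experiments. Given any net-optimal split, merge all posteriors at which the same action is optimal. The merged split keeps the gross payoff unchanged, and by concavity of \(H_\X\) it weakly lowers the cost. So the problem becomes choosing a conditional distribution \(P(a\mid x)\) to maximize
\[
\sum_x\mu(x)\sum_a P(a\mid x)u(a,x)-\lambda I(X;A).
\]
The cost term \(I(X;A)\) equals \(C^{\lambda H_\X}\) divided by \(\lambda\), as noted after Definition~\ref{def:ups-cost}.

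Second, we rewrite the mutual information variationally:
\[
I(X;A)=\min_{r\in\Delta(\A)}\sum_x\mu(x)\,\mathrm{KL}\bigl(P(\cdot\mid x)\,\|\,r\bigr),
\]
with the minimum attained at the action marginal. The net value then becomes
\[
\max_{r}\;\sum_x\mu(x)\max_{P(\cdot\mid x)}\Bigl\{\sum_a P(a\mid x)u(a,x)-\lambda\,\mathrm{KL}\bigl(P(\cdot\mid x)\|r\bigr)\Bigr\}.
\]
Exchanging the two maximizations is legitimate because both are maximizations.

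Third, we solve the inner problem statewise using the Gibbs variational principle. The maximizer is the logit rule
\[
P(a\mid x)=\frac{r_a e^{u(a,x)/\lambda}}{\sum_b r_b e^{u(b,x)/\lambda}},
\]
and the inner value is \(\lambda\log\sum_a r_a\exp(u(a,x)/\lambda)\). Substituting this value back in gives Eq.~\eqref{eq:shannon-dual}. The outer maximum over the compact simplex \(\Delta(\A)\) is attained, since the objective is continuous in \(r\). It stays well defined when some \(r_a=0\), because the terms inside the logarithm are strictly positive whenever \(r\neq0\).

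The main obstacle is the first step. We must make sure that the reduction to at most \(|\A|\) action-labelled posteriors loses nothing. We must also make sure that the posterior-split formulation of Lemma~\ref{lem:acquisition-cav} coincides with the channel formulation, in particular that \(H_\X(\mu)-\E_\eta H_\X(\nu)=I(X;S)\) for the implementing experiment. The merging argument needs only Jensen's inequality for the concave \(H_\X\), applied within each group of posteriors, together with the fact that the payoff of the recommended action is linear in the posterior. The remaining steps are standard convex duality.
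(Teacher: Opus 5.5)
Your proof is correct, and it reaches the converse (attainment) direction by a different route from the paper's. Your first step matches the paper's reduction to action rules: you merge posteriors by Jensen's inequality, while the paper cites data processing, and these amount to the same fact. Your statewise Gibbs bound is also the same log-sum inequality the paper uses for the upper bound.

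The difference lies in where the reference measure $r$ sits. The paper applies that inequality only at the induced marginal $r_a=\sum_x\mu(x)p(a\mid x)$. To show the bound is attained, it must then check, through the KKT conditions at a maximizer $r^*$ of the dual, that the logit rule built from $r^*$ reproduces $r^*$ as its action marginal.

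Your identity $\sum_x\mu(x)\mathrm{KL}(P(\cdot\mid x)\|r)=I(X;A)+\mathrm{KL}(P_A\|r)$, with $P_A$ the action marginal, frees $r$ from that consistency constraint. Swapping the two suprema then makes attainment automatic, with no first-order analysis. The fixed-point property even comes out as a byproduct. Write $F(P,r)$ for your relaxed objective, $V$ for its supremum, and $\mathrm{net}(P)=\max_r F(P,r)$, and let $P^*$ be the logit rule at a dual maximizer $r^*$. Then $V=F(P^*,r^*)=\mathrm{net}(P^*)-\lambda\,\mathrm{KL}(P^*_A\|r^*)\le V-\lambda\,\mathrm{KL}(P^*_A\|r^*)$, which forces $P^*_A=r^*$.

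What the paper's route buys is the explicit first-order conditions: active gradients equal $\lambda$ and inactive gradients are at most $\lambda$. The paper reuses these in Lemma~\ref{lem:kkt-app} and in the active-set formulas of the classification slope test. Your argument is the cleaner proof of the representation itself.
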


\begin{proof}
It is without loss of generality to use final actions as the acquisition signal.
Replacing any acquisition signal by its induced final action preserves gross utility and weakly lowers Shannon information by data processing.
Thus an experiment can be represented by a conditional action rule \(p(a\mid x)\), with unconditional action prior
\[
    r_a=\sum_x\mu(x)p(a\mid x).
\]
For any such rule and its induced \(r\), the log-sum inequality gives, for each state \(x\),
\[
\sum_a p(a\mid x)u(a,x)
-
\lambda\sum_a p(a\mid x)\log\frac{p(a\mid x)}{r_a}
\le
\lambda\log\sum_a r_a\exp(u(a,x)/\lambda).
\]
After averaging over \(x\), every feasible rule has value at most the right side of Eq.~\eqref{eq:shannon-dual} evaluated at its induced \(r\), and at most the maximum over \(r\in\Delta(\A)\).

Conversely, let \(r^*\) maximize the right side of Eq.~\eqref{eq:shannon-dual}, and define
\[
    Z_x(r^*)=\sum_b r_b^*\exp(u(b,x)/\lambda),
    \qquad
    p^*(a\mid x)=
    \frac{r_a^*\exp(u(a,x)/\lambda)}{Z_x(r^*)}.
\]
The KKT conditions for the maximization over the simplex imply that every active action \(a\) with \(r_a^*>0\) satisfies
\[
    \lambda\sum_x\mu(x)\frac{\exp(u(a,x)/\lambda)}{Z_x(r^*)}=\lambda,
\]
where the multiplier equals \(\lambda\) because the \(r^*\)-weighted average of the active gradients is \(\lambda\).
Therefore
\[
    \sum_x\mu(x)p^*(a\mid x)=r_a^*
\]
for every active action, while both sides are zero for inactive actions.
Thus \(p^*\) is a feasible action rule with unconditional action prior \(r^*\), and it attains equality in the log-sum inequality.
This proves Eq.~\eqref{eq:shannon-dual}.
\end{proof}

\subsubsection{Piecewise-Affine Convexity Criterion}
\label{app:pwa-convexity}

\begin{lemma}[Convexity of Continuous Piecewise-Affine Functions]
\label{lem:pwa-convexity}
Let \(f\) be continuous and affine on each cell of a finite polyhedral subdivision of a convex domain.
Working in the affine hull of the domain, for adjacent full-dimensional cells \(C\) and \(C'\) sharing a facet \(F\), let \(n\) be a normal pointing from \(C\) to \(C'\), and let \(\nabla f_C\) and \(\nabla f_{C'}\) be the affine gradients on the two cells.
Then \(f\) is convex if and only if
\[
    (\nabla f_{C'}-\nabla f_C)\cdot n\ge0
\]
for every adjacent facet.
\end{lemma}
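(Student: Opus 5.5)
The plan is to prove both directions separately. Necessity (``only if'') will rest on a one-dimensional restriction to a segment crossing the facet. Sufficiency (``if'') will pass through local convexity on lines in general position. Throughout we work in the affine hull of the domain, so the cells are full-dimensional and gradients are well defined up to the normal directions of that hull, which are irrelevant.

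\emph{Necessity.} Suppose \(f\) is convex, and fix adjacent cells \(C,C'\) with shared facet \(F\). Pick a point \(z\) in the relative interior of \(F\) that lies on no other cell. Then for small \(\epsilon>0\) the points \(z-\epsilon n\in C\) and \(z+\epsilon n\in C'\), and the segment between them meets only \(C\) and \(C'\). Restricted to this segment, \(f\) is convex and piecewise affine, with left slope \(\nabla f_C\cdot n\) and right slope \(\nabla f_{C'}\cdot n\). Convexity of a one-variable function forces the right slope to be at least the left slope, which gives the stated inequality.

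\emph{Sufficiency.} A function is convex if and only if its restriction to every line segment in the domain is convex. By continuity of \(f\), it suffices to check segments \([x,y]\) in general position. Such a segment avoids all faces of codimension at least two, meets the facets transversally, and does so at finitely many points. Along such a segment, \(f\) is a continuous piecewise-affine function of one variable. Its slope on each piece is \(\nabla f_C\cdot(y-x)\) for the cell \(C\) being traversed. At each crossing of a facet from \(C\) to \(C'\), the direction \(y-x\) has a positive component along the normal \(n\) pointing from \(C\) to \(C'\). Continuity of \(f\) across \(F\) means the gradient difference \(\nabla f_{C'}-\nabla f_C\) is orthogonal to \(F\) within the affine hull, so it is a multiple of \(n\). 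The slope jump is then \((\nabla f_{C'}-\nabla f_C)\cdot(y-x)=c\,(\nabla f_{C'}-\nabla f_C)\cdot n\) with \(c>0\), which is nonnegative by hypothesis. So the one-dimensional slopes are nondecreasing, and a continuous function with nondecreasing slopes is convex. Segments in general position are dense, and convexity inequalities pass to limits by continuity, so \(f\) is convex on the whole domain.

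\emph{Main obstacle.} The delicate step is the general-position reduction. We must argue that the set of segments hitting lower-dimensional faces, or lying inside a facet's hyperplane, is negligible. We also need the approximation to transfer the three-point convexity inequality \(f(tx+(1-t)y)\le tf(x)+(1-t)f(y)\). I would handle this by perturbing the endpoints \(x,y\) slightly. The finitely many bad configurations are contained in finitely many proper algebraic (affine) subsets of the space of endpoint pairs, so arbitrarily small perturbations avoid them, and uniform continuity of \(f\) then closes the argument. The orthogonality claim, that a gradient jump across a facet is normal to it, follows because the two affine pieces agree on the facet's affine span.
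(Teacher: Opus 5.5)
Your proposal is correct and follows essentially the same route as the paper. Necessity comes from restricting \(f\) to a short segment through the relative interior of the shared facet. Sufficiency comes from checking segments that cross cell boundaries only through relative interiors of facets and then passing to arbitrary segments by continuity. You also supply two details the paper leaves implicit: the gradient jump across a facet is a multiple of \(n\), so each one-dimensional slope jump is a positive multiple of \((\nabla f_{C'}-\nabla f_C)\cdot n\), and a genericity argument shows that such segments are dense.
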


\begin{proof}
If \(f\) is convex, every line crossing a shared facet has nondecreasing one-dimensional slopes, which gives the stated inequality.

Conversely, suppose all adjacent jumps are nonnegative.
First consider a line segment that crosses cell boundaries only through relative interiors of facets.
Its restriction is continuous and piecewise affine, with nondecreasing slopes at every crossing, so it is convex.
Any other segment is a limit of such segments.
Continuity of \(f\) preserves the convexity inequalities in the limit, which proves convexity on the domain.
\end{proof}

\subsection{Classification Slope Test}
\label{app:proof-slope-criterion}

\begin{proof}[of Theorem~\ref{thm:slope-criterion}]
Define
\[
    E_i=\exp(w_i/\lambda),
    \qquad
    L_i=E_i-1,
    \qquad
    g_i=\frac{w_iE_i}{L_i}.
\]
For the classification payoff
\[
    u(a,i)=b_i+w_i\one\{a=i\},
\]
we have
\[
    \sum_{a\in\X}r_a\exp(u(a,i)/\lambda)
    =
    \exp(b_i/\lambda)(1+L_ir_i).
\]
Lemma~\ref{lem:shannon-dual} gives
\begin{equation}
W_{\lambda H_\X}(\mu)
=
\sum_i\mu_i b_i+
\lambda
\max_{r\in\Delta(\X)}
\sum_i\mu_i\log(1+L_ir_i).
\label{eq:weighted-classification-dual-app}
\end{equation}
The baseline term \(\sum_i\mu_i b_i\) is linear in \(\mu\).
It appears in the active-set formula below and cancels from adjacent slope jumps and gross Jensen gaps.

For a nonempty active set \(J\subseteq\X\), define
\[
    N_J=1+\sum_{i\in J}\frac1{L_i},
    \qquad
    \bar g_J=
    \frac1{N_J}
    \sum_{i\in J}\frac{w_iE_i}{L_i^2}.
\]
Any label with zero prior probability is inactive at an optimum, since allocating positive probability to it reduces the objective on the positive-probability labels.
On the remaining coordinates, the objective is strictly concave, so the optimal action prior is unique.
Fix a cell on which the active set is
\[
    J=\{i:r_i^*>0\}.
\]
The active KKT conditions for Eq.~\eqref{eq:weighted-classification-dual-app} are
\[
    \frac{L_i\mu_i}{1+L_ir_i^*}=\tau,
    \qquad i\in J.
\]
Hence
\[
    r_i^*=\frac{\mu_i}{\tau}-\frac1{L_i},
    \qquad
    \tau=\frac{\mu(J)}{N_J},
    \qquad
    \mu(J)=\sum_{i\in J}\mu_i.
\]
In particular,
\[
    1+L_ir_i^*=\frac{L_i\mu_i}{\tau}.
\]
The correct-label probability in state \(i\in J\) is
\[
    \Pr(a=i\mid i)
    =
    \frac{E_ir_i^*}{1+L_ir_i^*}.
\]
Substituting the KKT expression gives
\[
\mu_iw_i\Pr(a=i\mid i)
=
\mu_i\frac{w_iE_i}{L_i}
-
\tau\frac{w_iE_i}{L_i^2}.
\]
Summing over \(i\in J\) and using \(\tau=\mu(J)/N_J\) gives the gross value on active cell \(J\),
\begin{equation}
G_J(\mu)
=
\sum_i\mu_i b_i+
\sum_{i\in J}(g_i-\bar g_J)\mu_i .
\label{eq:gJ-weighted-app}
\end{equation}
An inactive label \(k\notin J\) satisfies
\[
    L_k\mu_k\le \tau=\frac{\mu(J)}{N_J},
\]
and the entry boundary between \(J\) and \(J\cup\{k\}\) is
\begin{equation}
    L_k\mu_k=\frac{\mu(J)}{N_J}.
\label{eq:entry-boundary-app}
\end{equation}

The active and inactive KKT inequalities are affine in \(\mu\), so their closures form a finite polyhedral subdivision of the belief simplex.
The unique optimal action prior varies continuously with the belief, including on faces with zero-probability states.
Thus the gross-value formulas extend continuously to cell boundaries.
Now consider adjacent active sets \(J\) and \(J\cup\{k\}\).
Since
\[
    N_{J\cup\{k\}}=N_J+\frac1{L_k},
\]
Eq.~\eqref{eq:gJ-weighted-app} gives
\begin{equation}
G_{J\cup\{k\}}(\mu)-G_J(\mu)
=
\frac{N_J\Delta_{J,k}}{N_JL_k+1}
\left(
    L_k\mu_k-\frac{\mu(J)}{N_J}
\right).
\label{eq:active-set-difference-slope-app}
\end{equation}
The term in parentheses is an affine coordinate normal to the facet within the belief simplex.
It is zero on the shared facet in Eq.~\eqref{eq:entry-boundary-app} and positive on the \(J\cup\{k\}\) side.
The prefactor outside \(\Delta_{J,k}\) is positive.
Hence the adjacent directional slope jump has the sign of
\[
\Delta_{J,k}
=
g_k-\bar g_J
=
\frac{w_k E_k}{E_k-1}
-
\frac{
\sum_{i\in J}{w_i E_i}/{(E_i-1)^2}
}{
1+\sum_{i\in J}{1}/{(E_i-1)}
}.
\]

If \(\Delta_{J,k}\ge0\) at every feasible adjacent boundary, then Eq.~\eqref{eq:active-set-difference-slope-app} and Lemma~\ref{lem:pwa-convexity} imply that \(G_{\lambda H_\X}\) is convex.
Therefore every Bayes-plausible posterior split has a weakly nonnegative gross Jensen gap.
Theorem~\ref{thm:posterior-geometry} then implies that every AI signal is weakly gross-improving.

If some feasible adjacent boundary has \(\Delta_{J,k}<0\), take a line in the simplex crossing the relative interior of the shared facet in its normal direction.
Let the coordinate \(s<0\) be on the \(J\) side and \(s>0\) be on the \(J\cup\{k\}\) side.
For sufficiently small \(\varepsilon>0\), the equal-weight split between the two posterior beliefs at \(s=-\varepsilon\) and \(s=\varepsilon\) has mean on the boundary.
The gross Jensen gap equals
\[
    \frac{\varepsilon}{2}(m_+-m_-),
\]
where \(m_+-m_-\) is the adjacent slope jump.
Since this jump is negative, the gap is strictly negative.
The linear baseline term cancels in this gap, so \(G_{\lambda H_\X}\) is nonconvex.
Theorem~\ref{thm:posterior-geometry} then implies that the environment admits an AI signal that generates gross harm.
\end{proof}

\subsection{Homogeneous Classification}
\label{app:proof-homogeneous-classification}

\begin{proof}[of Proposition~\ref{prop:homogeneous-classification}]
The payoff is
\[
    u(a,x)=b_x+w\one\{a=x\},
    \qquad w>0.
\]
Set
\[
    E=\exp(w/\lambda),
    \qquad
    L=E-1.
\]
The action-independent term contributes \(\sum_x\mu(x)b_x\).
By Lemma~\ref{lem:shannon-dual},
\[
W_{\lambda H_\X}(\mu)
=
\sum_x\mu(x) b_x+
\lambda
\max_{r\in\Delta(\X)}
\sum_x\mu(x)\log(1+Lr_x).
\]

Fix an active set \(J=\{x:r_x^*>0\}\).
The KKT conditions are
\[
    \frac{L\mu(x)}{1+Lr_x^*}=\tau,
    \qquad x\in J.
\]
Thus
\[
    r_x^*=\frac{\mu(x)}{\tau}-\frac1L,
    \qquad
    \tau=\frac{L\mu(J)}{L+|J|},
    \qquad
    \mu(J)=\sum_{x\in J}\mu(x).
\]
The correct-classification probability in state \(x\in J\) is
\[
    \Pr(a=x\mid x)
    =
    \frac{Er_x^*}{1+Lr_x^*}.
\]
Using \(1+Lr_x^*=L\mu(x)/\tau\), we obtain
\[
\sum_{x\in J}\mu(x) w\Pr(a=x\mid x)
=
\frac{wE\tau}{L}\sum_{x\in J}r_x^*
=
\frac{wE}{L+|J|}\mu(J).
\]
Therefore the gross value on active cell \(J\) is
\[
G_J(\mu)
=
\sum_x\mu(x) b_x+
\frac{wE}{L+|J|}\sum_{x\in J}\mu(x).
\]

It remains to identify which affine expression applies.
The inactive KKT inequalities are
\[
    \mu(x)\le \frac{\mu(J)}{L+|J|},
    \qquad x\notin J,
\]
while active probabilities require
\[
    \mu(x)> \frac{\mu(J)}{L+|J|},
    \qquad x\in J
\]
for positive action probabilities.
These inequalities imply that \(J\) maximizes
\[
    \frac{\mu(T)}{L+|T|}
\]
over nonempty \(T\subseteq\X\).
To see this, write \(c=\mu(J)/(L+|J|)\).
The active coordinates satisfy \(\sum_{x\in J}(\mu(x)-c)=Lc\), and every inactive coordinate has \(\mu(x)-c\le0\).
Consequently, \(\mu(T)-|T|c\le Lc\) for every \(T\), with equality at \(J\).
Boundary labels with zero action probability leave the maximizing value unchanged.
Hence
\[
G_{\lambda H_\X}(\mu)=\sum_x\mu(x) b_x+
\max_{\varnothing\ne J\subseteq\X}
\frac{w E}{L+|J|}\sum_{x\in J}\mu(x).
\]
Substituting \(E=e^{w/\lambda}\) and \(L=e^{w/\lambda}-1\) gives Eq.~\eqref{eq:hom-class-g}.

Finally, weak gross improvement in the homogeneous case follows directly from Theorem~\ref{thm:slope-criterion}.
For any feasible adjacent boundary between \(J\) and \(J\cup\{k\}\), let \(m=|J|\).
The slope jump is
\[
\Delta_{J,k}
=
\frac{wE}{L}
-
\frac{m wE/L^2}{1+m/L}
=
\frac{wE}{L+m}
>0.
\]
Thus every feasible adjacent-boundary slope jump is strictly positive, so every AI signal is weakly gross-improving in homogeneous classification.
\end{proof}

\subsection{Three-State Weighted Classification}
\label{app:proof-three-state-classification}

\begin{proof}[of Proposition~\ref{prop:three-state-gross-improvement}]
The baseline term \(\sum_i\mu_i b_i\) is linear.
By Theorem~\ref{thm:slope-criterion}, it is enough to check entrant boundaries for the stake vector \(w\).
Since \(|\X|=3\), an adjacent entry can only move from an active set of size one to size two, or from an active set of size two to size three.

First let \(J=\{i\}\).
Then
\[
\bar g_J
=
\frac{w_i}{e^{w_i/\lambda}-1}
<
\lambda,
\]
because \(s/(e^s-1)<1\) for every \(s>0\).
For every entrant \(k\),
\[
g_k
=
\lambda
\frac{(w_k/\lambda)e^{w_k/\lambda}}
{e^{w_k/\lambda}-1}
>
\lambda.
\]
Thus \(\Delta_{J,k}=g_k-\bar g_J>0\).

Now let \(J=\{i,j\}\).
The inequality \(\bar g_J<\lambda\) is equivalent to
\[
\sum_{m\in\{i,j\}}
\left[
\frac{(w_m/\lambda)e^{w_m/\lambda}}
{(e^{w_m/\lambda}-1)^2}
-
\frac1{e^{w_m/\lambda}-1}
\right]
<1.
\]
For \(s>0\), define
\[
\psi(s)
=
\frac{s e^s}{(e^s-1)^2}
-
\frac1{e^s-1}
=
\frac{(s-1)e^s+1}{(e^s-1)^2}.
\]
The numerator is positive for \(s>0\).
Moreover, \(\psi(s)<1/2\) is equivalent to
\[
    e^s-e^{-s}>2s,
\]
which holds strictly for \(s>0\).
Hence each term in the preceding sum is strictly below \(1/2\), so \(\bar g_J<\lambda\).
Every entrant still has \(g_k>\lambda\).
Therefore \(\Delta_{J,k}>0\) for every feasible adjacent boundary.

All feasible adjacent-boundary slope jumps are positive.
Theorem~\ref{thm:slope-criterion} implies that every AI signal is weakly gross-improving.
\end{proof}

\subsection{Entry-Dilution Specialization}
\label{app:entry-dilution}

The following corollary specializes the weighted-classification slope criterion to common high-stake blocks.
It explains why the four-class example is the first possible common-high-stake dilution failure.

\begin{corollary}[Common-Stake Entry-Dilution Specialization]
\label{cor:common-stake-entry-dilution-app}
Suppose an active block \(J\) consists of \(m\) states with common normalized stake \(h=w/\lambda\), and an entering state \(k\) has normalized stake \(\ell<h\).
At any feasible adjacent boundary of this form, a downward gross-slope jump occurs if
\begin{equation}
A(\ell)<\frac{mhe^h}{(e^h-1)(e^h-1+m)},
\qquad
A(t)=\frac{te^t}{e^t-1}.
\label{eq:dilution-condition}
\end{equation}
One-high and two-high active blocks cannot satisfy Eq.~\eqref{eq:dilution-condition} for any positive entrant stake, while a three-high active block can.
\end{corollary}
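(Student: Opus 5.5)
The plan is to specialize the slope jump \(\Delta_{J,k}\) from Theorem~\ref{thm:slope-criterion} to a common-stake block, normalize by \(\lambda\), and then reduce everything to one-variable inequalities. Write \(w_i=h\lambda\) for \(i\in J\) with \(|J|=m\), and \(w_k=\ell\lambda\). Set \(E=e^h\) and \(L=e^h-1\).

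The entrant term is \(g_k=\lambda\,\ell e^\ell/(e^\ell-1)=\lambda A(\ell)\). The dilution term is
\[
\bar g_J=\frac{m h\lambda E/L^2}{1+m/L}=\lambda\,\frac{m h e^h}{(e^h-1)(e^h-1+m)} .
\]
Hence \(\Delta_{J,k}<0\) is exactly Eq.~\eqref{eq:dilution-condition}. Theorem~\ref{thm:slope-criterion} says the sign of \(\Delta_{J,k}\) is the sign of the gross-slope jump across the facet, so a feasible boundary of this form bends \(G_{\lambda H_\X}\) downward. This gives the first claim, and in fact an equivalence.

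For the second claim I would use two facts about \(A\): it is strictly increasing on \((0,\infty)\), and \(A(t)\downarrow 1\) as \(t\downarrow0\), so \(A(\ell)>1\) for every \(\ell>0\). Let \(R_m(h)\) denote the right side of Eq.~\eqref{eq:dilution-condition}. Some positive \(\ell\) satisfies the condition iff \(R_m(h)>1\).

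The constraint \(\ell<h\) costs nothing. Since \(R_m(h)=A(h)\cdot m/(e^h-1+m)<A(h)\) and \(A\) is increasing, any solution \(\ell\) automatically lies below \(h\). So the question reduces to comparing \(R_m(h)\) with \(1\):
\begin{itemize}
\item For \(m=1\), \(R_1(h)=h/(e^h-1)<1\).
\item For \(m=2\), \(R_2(h)=2he^h/(e^{2h}-1)=h/\sinh h<1\), using \(\sinh h>h\).
\end{itemize}
These are the same elementary bounds used in the proof of Proposition~\ref{prop:three-state-gross-improvement}. They rule out one-high and two-high blocks for every positive entrant stake.

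For \(m=3\) a witness suffices. Expanding at small \(h\),
\[
R_3(h)=\frac{(1+h)(1+O(h^2))}{(1+h/2)(1+h/3)+O(h^2)}=1+\tfrac{h}{6}+O(h^2)>1,
\]
so for small \(h>0\) some small \(\ell\) works. As a concrete check, \(h=0.3\) gives \(R_3\approx1.037\), while \(A(0.05)\approx1.025\).

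The main obstacle is that the statement says a three-high block \emph{can} produce dilution. If this is meant to include feasibility of the boundary (the cells for \(J\) and \(J\cup\{k\}\) sharing a nonempty facet), the sign computation alone is not enough. I would settle feasibility by exhibiting a belief on the entry boundary Eq.~\eqref{eq:entry-boundary-app} with equal large mass on the three high labels and \(L_k\mu_k=\mu(J)/N_J\). At such a belief the high labels are strictly active, since \(\mu_i>\tau/L_i\), and a small normal perturbation stays within both cells.

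Alternatively, I would point to Example~\ref{ex:weighted-four-state}, with \(h=0.3\) and \(\ell=0.05\), whose posterior \(\mu^\ell\) lies in the four-active cell while \(\mu^r\) lies in the three-active cell. That shows the facet is nonempty there.
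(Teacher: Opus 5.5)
Your proposal is correct and follows the paper's proof. You substitute the common stake into $\Delta_{J,k}$ to get $g_k=\lambda A(\ell)$ and the dilution term. You then compare $h/(e^h-1)<1$ and $h/\sinh h<1$ against $A(\ell)>1$ for $m=1,2$, and use a numerical witness for $m=3$; the paper uses $h=1/2$, where the right side is about $1.0448$, whereas your $h=0.3$, $\ell=0.05$ check also works. Your extras go slightly beyond what the paper's proof records but are sound: $\ell<h$ is automatic because the right side equals $A(h)\,m/(e^h-1+m)<A(h)$, and you verify that the three-high entry facet is actually feasible.
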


\begin{proof}
Substitute \(w_i/\lambda=h\) for all \(i\in J\) and \(w_k/\lambda=\ell\) into Theorem~\ref{thm:slope-criterion}.
The entrant term is
\[
    g_k=\lambda A(\ell),
\]
and the active-block dilution term is
\[
    \bar g_J
    =
    \lambda
    \frac{mhe^h}{(e^h-1)(e^h-1+m)}.
\]
Thus \(\Delta_{J,k}<0\) is equivalent to Eq.~\eqref{eq:dilution-condition}.

If \(m=1\), the right side is
\[
    \frac{h}{e^h-1}<1.
\]
If \(m=2\), the right side is
\[
    \frac{h}{\sinh h}<1.
\]
Since \(A(\ell)>1\) for every \(\ell>0\), neither case can satisfy Eq.~\eqref{eq:dilution-condition} for any positive entrant stake.
If \(m=3\), the right side can exceed one.
For example, at \(h=1/2\) it is approximately \(1.0448\), while \(A(\ell)\downarrow1\) as \(\ell\downarrow0\).
Hence a sufficiently low-stake entrant can generate a downward jump when three high-stake states are already active.
\end{proof}

\subsection{Nonmonotonicity in Signal Informativeness}
\label{app:proof-informativeness-nonmonotonicity}

\begin{proof}[of Proposition~\ref{prop:informativeness-nonmonotonicity}]
Let \(Y\) be a gross-harmful signal at \(\mu_0\).
Denote the gross values under no AI, under \(Y\), and under full revelation by
\[
    g_0=G_\Phi(\mu_0),
    \qquad
    g_Y=G^{\mathrm{AI}}_\Phi(Y,\mu_0),
    \qquad
    g_F=\sum_x\mu_0(x)v(e_x).
\]
Gross harm gives \(g_Y<g_0\).
Full revelation permits the statewise optimal action, so \(g_Y<g_0\le g_F\).
In both constructions below, the choice of which information to provide is independent of the state, and the message identifies that choice.

First consider a path from no information to \(Y\).
For \(p\in[0,1]\), let \(Y_p^-\) provide \(Y\) with probability \(p\) and otherwise provide a distinct uninformative message.
The path starts from no information at \(p=0\) and reaches \(Y\) at \(p=1\).
For \(p<p'\), replacing each message from the \(Y\) branch of \(Y_{p'}^-\) with the uninformative message with probability \(1-p/p'\) produces \(Y_p^-\).
This garbling establishes Blackwell dominance as \(p\) increases.
The gross value is
\[
    G^{\mathrm{AI}}_\Phi(Y_p^-,\mu_0)
    =(1-p)g_0+pg_Y,
\]
which strictly decreases with \(p\).

Next consider a path from \(Y\) to full revelation.
For \(q\in[0,1]\), let \(Y_q^+\) fully reveal the state with probability \(q\) and otherwise provide \(Y\).
The path starts from \(Y\) at \(q=0\) and reaches full revelation at \(q=1\).
For \(q<q'\), keep every message from the \(Y\) branch of \(Y_{q'}^+\).
For a full-revelation message \(x\), retain that message with probability \(q/q'\).
Otherwise generate a message according to \(P_{Y|X}(\cdot\mid x)\) and report it as a realization of \(Y\).
This garbling produces \(Y_q^+\), so informativeness increases with \(q\).
The gross value is
\[
    G^{\mathrm{AI}}_\Phi(Y_q^+,\mu_0)
    =(1-q)g_Y+qg_F,
\]
which strictly increases with \(q\).

Blackwell-equivalent signals induce the same posterior distribution at \(\mu_0\) and hence the same expected continuation gross value.
The strict value changes above imply strict dominance on each path.
Both signal laws vary continuously with their mixing probabilities, and the paths meet at \(Y\), with the common endpoint included only once.
Transitivity of the Blackwell order also orders signals on opposite sides of this endpoint.
Together the paths form the required continuous family, with gross utility strictly decreasing up to \(Y\) and strictly increasing thereafter.
\end{proof}

\subsection{Gross Harm under Endogenous Interpretation}
\label{app:proof-endogenous-interpretation}

\begin{proof}[of Theorem~\ref{thm:shannon-nonconvexity-harm}]
Since \(G_\Phi\) is not convex, there exist \(\mu^1,\mu^2\in\DeltaX\) and \(t\in(0,1)\) such that, for \(\mu_0=t\mu^1+(1-t)\mu^2\),
\begin{equation}
    G_\Phi(\mu_0)
    >
    tG_\Phi(\mu^1)+(1-t)G_\Phi(\mu^2).
\label{eq:shannon-gross-jensen-loss-app}
\end{equation}
We first show that the same split gives a strict net Jensen gain for the Shannon continuation value.
By Lemma~\ref{lem:shannon-dual},
\[
    W_\Phi(\mu)=\max_{r\in\Delta(\A)}L(r,\mu),
\]
where
\[
    L(r,\mu)=
    \lambda
    \sum_{x\in\X}
    \mu(x)
    \log
    \left(
        \sum_{a\in\A}r_a\exp(u(a,x)/\lambda)
    \right).
\]
For a fixed action prior \(r\), the function \(L(r,\cdot)\) is affine.
Let \(R(\mu)=\argmax_{r\in\Delta(\A)}L(r,\mu)\) be the set of optimal action priors.
For \(r\in\Delta(\A)\), define the associated Shannon action rule and gross-payoff functional by
\[
    \sigma(a\mid x;r)
    =
    \frac{r_a\exp(u(a,x)/\lambda)}{\sum_{b\in\A}r_b\exp(u(b,x)/\lambda)},
    \qquad
    A(r,\mu)=\sum_x\mu(x)\sum_a\sigma(a\mid x;r)u(a,x).
\]
For a fixed action prior \(r\), the function \(A(r,\cdot)\) is affine, and the gross-favorable continuation selection gives
\[
    G_\Phi(\mu)=\max_{r\in R(\mu)}A(r,\mu).
\]
Suppose, toward a contradiction, that
\[
    W_\Phi(\mu_0)
    =
    tW_\Phi(\mu^1)+(1-t)W_\Phi(\mu^2).
\]
Choose \(r^*\in R(\mu_0)\) attaining \(G_\Phi(\mu_0)=A(r^*,\mu_0)\).
Since \(L(r^*,\cdot)\) is affine and \(L(r^*,\mu^j)\le W_\Phi(\mu^j)\) for \(j=1,2\), the displayed equality forces \(r^*\in R(\mu^1)\cap R(\mu^2)\).
Therefore
\[
\begin{aligned}
    G_\Phi(\mu_0)
    &= A(r^*,\mu_0) \\
    &= tA(r^*,\mu^1)+(1-t)A(r^*,\mu^2) \\
    &\le tG_\Phi(\mu^1)+(1-t)G_\Phi(\mu^2),
\end{aligned}
\]
contradicting Eq.~\eqref{eq:shannon-gross-jensen-loss-app}.
Hence
\begin{equation}
    W_\Phi(\mu_0)
    <
    tW_\Phi(\mu^1)+(1-t)W_\Phi(\mu^2).
\label{eq:shannon-net-jensen-gain-app}
\end{equation}

Construct a binary AI signal \(Y\in\{1,2\}\) with \(P(Y=1)=t\), \(P(Y=2)=1-t\), \(P(X\mid Y=1)=\mu^1\), and \(P(X\mid Y=2)=\mu^2\).
Then \(\bar q=(t,1-t)\).
For an interpretation posterior \(q=(s,1-s)\), the induced payoff-state belief is
\[
    \rho(q)=s\mu^1+(1-s)\mu^2.
\]
Set
\[
    F(s)=W_\Phi\bigl(s\mu^1+(1-s)\mu^2\bigr),
    \qquad s\in[0,1].
\]
The function \(F\) is convex because \(W_\Phi\) is the maximum of affine functions.
Eq.~\eqref{eq:shannon-net-jensen-gain-app} gives
\[
    F(t)<tF(1)+(1-t)F(0).
\]
Let \(\ell(s)=sF(1)+(1-s)F(0)\) be the endpoint chord.
Convexity gives \(F(s)\le\ell(s)\) for all \(s\in[0,1]\), and the strict inequality at \(t\) implies \(F(s)<\ell(s)\) for all \(s\in(0,1)\).
Thus, among interpretation posterior splits with mean \(t\), full interpretation
\[
    \zeta^F=t\delta_1+(1-t)\delta_0
\]
is the unique zero-cost net optimizer.
Its selected gross utility is
\[
    tG_\Phi(\mu^1)+(1-t)G_\Phi(\mu^2),
\]
which is strictly below \(G_\Phi(\mu_0)\) by Eq.~\eqref{eq:shannon-gross-jensen-loss-app}.
This proves Theorem~\ref{thm:shannon-nonconvexity-harm}.

The same argument also establishes robustness to a small positive interpretation cost, a result not used in the main text.
Choose any nonaffine continuous concave \(\Psi:\DeltaY\to\R\).
Let \(\mathcal P_t\) be the compact set of probability measures on \([0,1]\) with mean \(t\).
For \(\zeta\in\mathcal P_t\), write
\[
    B(\zeta)=\int F(s)\,d\zeta(s),
    \qquad
    I_\Psi(\zeta)=\Psi(\bar q)-\int\Psi(s,1-s)\,d\zeta(s),
\]
and
\[
    \mathcal G(\zeta)=
    \int G_\Phi\bigl(s\mu^1+(1-s)\mu^2\bigr)\,d\zeta(s).
\]
The first-stage net objective under interpretation cost \(\varepsilon\Psi\) is \(B(\zeta)-\varepsilon I_\Psi(\zeta)\).
The function \(L\) is continuous on the compact action-prior simplex, so \(R(\mu)\) is compact-valued and upper hemicontinuous.
Continuity of \(A\) then makes \(G_\Phi(\mu)=\max_{r\in R(\mu)}A(r,\mu)\) upper semicontinuous.
Hence \(\mathcal G\) is upper semicontinuous.
Since \(\zeta^F\) is the unique maximizer of \(B\) and \(\mathcal G(\zeta^F)<G_\Phi(\mu_0)\), there is a neighborhood \(N\) of \(\zeta^F\) and a number \(\delta>0\) such that
\[
    \mathcal G(\zeta)
    \le
    G_\Phi(\mu_0)-\delta
    \qquad
    \text{for all }\zeta\in N.
\]
If \(\mathcal P_t\setminus N\) is nonempty, compactness and uniqueness give
\[
    \gamma
    =
    B(\zeta^F)-\max_{\zeta\in\mathcal P_t\setminus N}B(\zeta)
    >0.
\]
The cost difference \(|I_\Psi(\zeta^F)-I_\Psi(\zeta)|\) is bounded on \(\mathcal P_t\).
Choose \(\bar\varepsilon>0\) so small that the perturbation by \(\varepsilon I_\Psi\) is less than \(\gamma\) on \(\mathcal P_t\setminus N\) for every \(0\le\varepsilon<\bar\varepsilon\).
If \(\mathcal P_t\setminus N\) is empty, take any \(\bar\varepsilon>0\).
Then every first-stage optimizer lies in \(N\) for every \(0\le\varepsilon<\bar\varepsilon\).
Consequently every gross-favorable extension optimizer has gross utility below \(G_\Phi(\mu_0)\) for all sufficiently small positive interpretation-cost multipliers.
\end{proof}

\subsection{Cheaper Interpretation Can Lower Gross Utility}
\label{app:proof-cheaper-interpretation}

\begin{proof}[of Proposition~\ref{prop:cheaper-interpretation-harms}]
Choose a Shannon acquisition environment with nonconvex \(G_\Phi\), such as Example~\ref{ex:treatment}.
Theorem~\ref{thm:shannon-nonconvexity-harm} supplies a prior and a binary AI signal \(Y\) for which full interpretation is the unique zero-cost optimal posterior split and
\[
    G^{\mathrm{AI}}_{0,\Phi}(Y,\mu_0)<G_\Phi(\mu_0).
\]
Set \(\Psi=H_\Y\) and fix any finite \(\lambda_1>\lambda\).

Consider an interpretation channel \(Y\to Z\) followed by a continuation experiment \(S\) chosen conditional on \(Z\).
The joint distribution of \((X,Z,S)\) can be implemented by a single experiment about \(X\), without interpreting \(Y\), through the channel
\[
    P_{Z,S\mid X}(z,s\mid x)
    =\sum_yP_{Y\mid X}(y\mid x)P_{Z\mid Y}(z\mid y)
             P_{S\mid X,Z}(s\mid x,z).
\]
This implementation preserves the final posterior distribution and the gross payoff.
The two-stage policy costs
\[
    \lambda_1 I(Y;Z)+\lambda I(X;S\mid Z),
\]
whereas direct acquisition costs
\[
    \lambda I(X;Z,S)
    =\lambda I(X;Z)+\lambda I(X;S\mid Z).
\]
The Markov chain \(X-Y-Z\) gives \(I(X;Z)\le I(Y;Z)\) by the data processing inequality.
The two-stage cost minus the direct cost is at least
\[
    (\lambda_1-\lambda)I(Y;Z),
\]
which is positive whenever \(I(Y;Z)>0\).
No such informative interpretation can be optimal.
If \(I(Y;Z)=0\), \(Z\) is independent of \(Y\) and of \(X\), so every continuation problem starts from \(\mu_0\).
Optimizing the continuation policy and applying the gross-favorable selection rule gives \(G^{\mathrm{AI}}_{\lambda_1\Psi,\Phi}(Y,\mu_0)=G_\Phi(\mu_0)\).
Combining this equality with the strict loss at zero interpretation cost proves the claim.
\end{proof}

%% file: appendix_B_numerical_verifications.tex
\section{Numerical Verifications}
\label{app:numerical-verifications}

This appendix records numerical verifications for the examples and the illustrations in Section~\ref{sec:discussion-extension}.
Each verification states the environment, posterior split or path, value comparisons, and optimality checks.

\subsection{Shannon Logit KKT Conditions}
\label{app:shannon-kkt}

\begin{lemma}[Shannon Logit KKT Conditions]
\label{lem:kkt-app}
Let the uncertainty function be \(\Phi=\lambda H_\X\) with \(\lambda>0\).
Set \(E_{xa}=e^{u(a,x)/\lambda}\) and \(N_x(r)=\sum_a r_aE_{xa}\).
For an action prior \(r\in\Delta(\A)\), define
\[
    \mathfrak g_a(r,\mu)
    =
    \lambda\sum_x\mu_x\frac{E_{xa}}{N_x(r)}.
\]
If \(\mathfrak g_a(r,\mu)=\lambda\) for every active action \(a\) with \(r_a>0\), and \(\mathfrak g_a(r,\mu)\le\lambda\) for every inactive action, then \(r\) is a global maximizer of the Shannon logit dual.
If all inactive inequalities are strict and the active-support Hessian is nonsingular on the support simplex, the active support is locally stable.
\end{lemma}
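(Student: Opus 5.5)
The plan is to treat the dual objective \(D(r)=\lambda\sum_x\mu_x\log N_x(r)\) as a concave function on the simplex and read the hypotheses as its first-order optimality conditions. Since \(N_x(r)\) is linear in \(r\) and \(\log\) is concave, \(D\) is concave on \(\Delta(\A)\). Its partial derivatives are
\[
    \partial D/\partial r_a=\lambda\sum_x\mu_xE_{xa}/N_x(r)=\mathfrak g_a(r,\mu).
\]
Every \(N_x(r)>0\) because all \(E_{xa}>0\), so the gradient is well defined on the whole simplex.

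For global optimality, I would use the standard first-order characterization of concave maximization over a convex set. The point \(r\) maximizes \(D\) on \(\Delta(\A)\) if and only if \(\nabla D(r)\cdot(r'-r)\le0\) for all \(r'\in\Delta(\A)\). I would check this directly. We have \(\nabla D(r)\cdot r=\sum_{a:r_a>0}r_a\lambda=\lambda\), and also \(\nabla D(r)\cdot r'\le\lambda\sum_ar'_a=\lambda\) because every \(\mathfrak g_a\le\lambda\). Hence the directional derivative is nonpositive in every feasible direction. Concavity then gives \(D(r')\le D(r)+\nabla D(r)\cdot(r'-r)\le D(r)\), so \(r\) is a global maximizer. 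By Lemma~\ref{lem:shannon-dual}, this \(r\) therefore attains \(W_{\lambda H_\X}(\mu)\).

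For local stability of the active support, I would apply the implicit function theorem to the active KKT system as \((r,\mu)\) varies. Let \(S=\{a:r_a>0\}\). The unknowns are \((r_S,\theta)\), and the equations are \(\mathfrak g_a(r,\mu)=\theta\) for \(a\in S\) together with \(\sum_{a\in S}r_a=1\). At the base point \(\theta=\lambda\); by the identity \(\sum_a r_a\mathfrak g_a=\lambda\), the multiplier stays equal to \(\lambda\) automatically at any solution. The Jacobian in \((r_S,\theta)\) is the bordered Hessian \(\begin{pmatrix}\nabla^2_{SS}D & -\one\\ \one^\top & 0\end{pmatrix}\). It is nonsingular exactly when the Hessian restricted to the tangent space \(\{d:\sum_{a\in S}d_a=0\}\) is nonsingular, which is the hypothesis. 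The implicit function theorem then yields a continuous solution \(r_S(\mu)\) near the base belief.

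By continuity, the positive coordinates \(r_a>0\) for \(a\in S\) and the strict inactive inequalities \(\mathfrak g_a<\lambda\) for \(a\notin S\) persist in a neighborhood. The perturbed point therefore satisfies the hypotheses of the first part, so it is a global maximizer with the same support. Small perturbations of the payoffs are handled identically, since \(E_{xa}\) enters the equations smoothly.

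The main obstacle is phrasing "nonsingular on the support simplex" correctly as the bordered-Hessian condition and verifying that the multiplier equals \(\lambda\) along the perturbed path. Both are routine once the identity \(\sum_a r_a\mathfrak g_a=\lambda\) is noted.
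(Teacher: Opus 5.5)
Your proposal is correct and follows the same route as the paper. Both arguments use concavity of \(\lambda\sum_x\mu_x\log N_x(r)\) with \(\mathfrak g_a\) as its partial derivatives, the identity \(\sum_a r_a\mathfrak g_a(r,\mu)=\lambda\) to pin the multiplier at \(\lambda\), sufficiency of the first-order conditions under concavity, and the implicit function theorem with strict inactive slack for support stability. You make explicit two steps the paper leaves terse, the check \(\nabla D(r)\cdot(r'-r)\le0\) and the bordered-Hessian form of the Jacobian, and both are correct.
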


\begin{proof}
The Shannon logit dual objective is
\[
    \lambda\sum_x\mu_x\log N_x(r).
\]
It is concave in the action prior \(r\).
Its partial derivative with respect to the action probability \(r_a\) is \(\mathfrak g_a(r,\mu)\).
Under the simplex constraint, the KKT conditions require active gradients to equal a common multiplier and inactive gradients to be weakly below it.
Since
\[
    \sum_a r_a\mathfrak g_a(r,\mu)=\lambda,
\]
the multiplier is \(\lambda\).
Concavity makes the conditions sufficient for global optimality.
Strict inactive slack and nonsingularity give local support stability by the implicit function theorem.
The same argument permits joint feasible perturbations of the belief, payoffs, and positive multiplier \(\lambda\).
Active probabilities and strict inactive inequalities persist, and the selected gross value is locally continuous.
A strict gross-loss margin then persists for nearby Bayes-plausible AI splits.
\end{proof}

Given an action prior \(r\) satisfying these conditions, the conditional action rule is
\[
    \sigma(a\mid x)
    =
    \frac{r_ae^{u(a,x)/\lambda}}
    {\sum_b r_be^{u(b,x)/\lambda}}.
\]
The associated values are
\begin{align*}
    W_{\lambda H_\X}(\mu)
    &=
    \lambda\sum_x\mu_x\log N_x(r),\\
    G_{\lambda H_\X}(\mu)
    &=
    \sum_x\mu_x\sum_a\sigma(a\mid x)u(a,x),\\
    K_{\lambda H_\X}(\mu)
    &=
    G_{\lambda H_\X}(\mu)-W_{\lambda H_\X}(\mu).
\end{align*}

\FloatBarrier
\subsection{Three-State Treatment Verification}
\label{app:treatment-verification}

\begin{figure}[t]
\centering
\includegraphics[width=.96\linewidth]{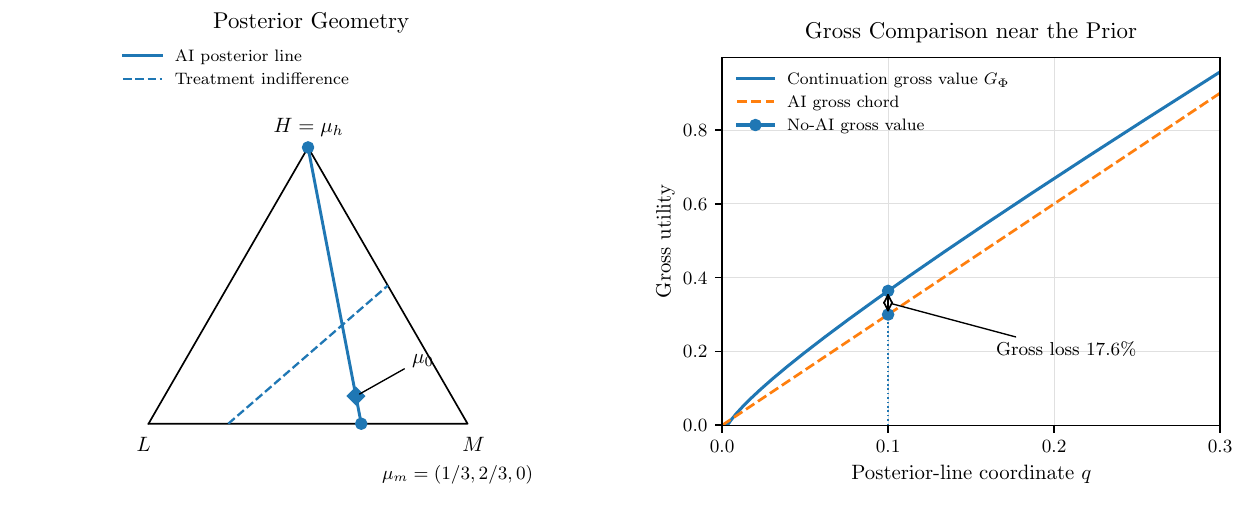}
\caption{Posterior geometry in Example~\ref{ex:treatment} with \(\Phi=H_\X\) and prior \(\mu_0=(0.3,0.6,0.1)\).
The left panel shows the two AI posteriors, their prior mean, and the static treatment-indifference line \(\mu_L-3\mu_M+3\mu_H=0\).
The right panel compares the continuation gross value with the AI gross chord along \(\mu(q)=(1-q)\mu_m+q\mu_h\), focusing on the region near \(q=0.1\).
At the prior, the chord lies below continuation gross value by approximately \(0.06409\), a gross loss of \(17.6\%\).}
\label{fig:posterior-geometry-treatment}
\end{figure}

\begin{proof}[of the claims in Example~\ref{ex:treatment}]
The uncertainty function is \(\Phi=H_\X\), so the Shannon multiplier is \(\lambda=1\).
The prior is \(\mu_0=(0.3,0.6,0.1)\), and treatment payoffs relative to no treatment are \((\Delta_L,\Delta_M,\Delta_H)=(1,-3,3)\).
Write \(r\) for the unconditional treatment probability and \(D_x(r)=1+r(e^{\Delta_x}-1)\).
By Lemma~\ref{lem:shannon-dual}, the no-AI problem is
\[
    W_\Phi(\mu_0)=\max_{r\in[0,1]} f(r),
    \qquad
    f(r)=\sum_{x\in\X}\mu_0(x)\log D_x(r).
\]
The derivatives satisfy
\[
    f'(r)=\sum_x\mu_0(x)\frac{e^{\Delta_x}-1}{D_x(r)},
    \qquad
    f''(r)=-\sum_x\mu_0(x)
    \frac{(e^{\Delta_x}-1)^2}{D_x(r)^2}<0.
\]
The derivative is positive at \(0.22769923828\) and negative at \(0.22769923829\).
Strict concavity gives a unique optimum between these two values, with \(r^*\approx0.227699238283\).
The induced action rule is
\[
    P(1\mid x)=\frac{r^*e^{\Delta_x}}{D_x(r^*)},
    \qquad
    \bigl(P(1\mid L),P(1\mid M),P(1\mid H)\bigr)
    \approx(0.44488773,0.01446649,0.85553028).
\]
Substitution gives the no-AI values in the table below.
Interval evaluation over the root bracket also gives \(0.36408572<G_\Phi(\mu_0)<0.36408574\), so the gross-loss sign does not depend on the displayed rounding.

The AI signal identifies the severe state.
Its posteriors are \(\mu_h=(0,0,1)\) and \(\mu_m=(1/3,2/3,0)\), with probabilities \(0.1\) and \(0.9\).
At \(\mu_h\), treatment gives payoff \(3\) without acquisition.
At \(\mu_m\), the Shannon KKT condition for no acquisition with action \(0\) is strict,
\[
    \sum_x\mu_m(x)e^{\Delta_x}
    =\frac13e+\frac23e^{-3}
    \approx0.93928532<1.
\]
Thus the unique optimal action prior at \(\mu_m\) assigns probability one to no treatment, with net value, gross value, and acquisition cost all equal to zero.
Averaging these two branches gives the following values.
The displayed continuation values are rounded to eight decimal places.
The AI averages are exact.
\[
\begin{array}{lrrr}
\toprule
\text{Starting belief or comparison} & W_\Phi & G_\Phi & K_\Phi\\
\midrule
\text{No AI }(\mu_0) & 0.12040607 & 0.36408573 & 0.24367965\\
\text{Severe posterior }(\mu_h) & 3 & 3 & 0\\
\text{Nonsevere posterior }(\mu_m) & 0 & 0 & 0\\
\text{AI average} & 0.30000000 & 0.30000000 & 0\\
\text{AI minus no AI} & 0.17959393 & -0.06408573 & -0.24367965\\
\bottomrule
\end{array}
\]
The values satisfy \(\Delta W_\Phi=\Delta G_\Phi-\Delta K_\Phi\).
The relative gross loss is \(-\Delta G_\Phi/G_\Phi(\mu_0)\approx0.17601823\), or \(17.6\%\).
The no-AI optimum is interior with negative second derivative.
Both AI posteriors have strict inactive-action slack, equal to \(1-e^{-3}\) at \(\mu_h\) and \(1-(e+2e^{-3})/3\) at \(\mu_m\).
These conditions support the local-robustness statement by Lemma~\ref{lem:kkt-app}.
\end{proof}

\FloatBarrier
\subsection{Four-Class Triage Dilution Verification}
\label{app:triage-verification}

\begin{figure}[t]
\centering
\includegraphics[width=.96\linewidth]{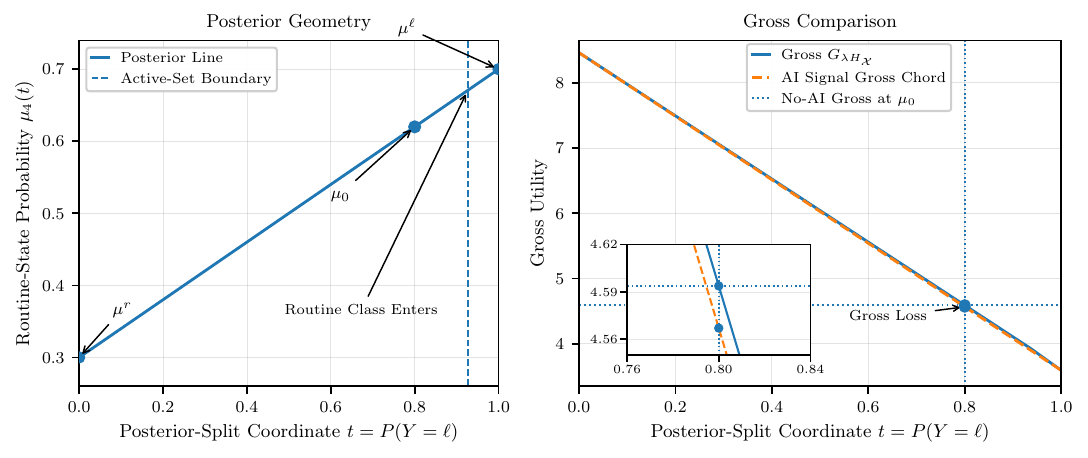}
\caption{Posterior geometry in Example~\ref{ex:weighted-four-state}.
The left panel reduces the symmetric four-state belief line to the routine-state probability \(\mu_4\) and marks the AI posteriors, the prior, and the active-set boundary at which the low-stake routine class enters the action support.
The right panel plots the continuation gross value \(G_{\lambda H_\X}\), with \(\lambda=100\), the AI advice gross chord, and the no-AI gross utility at the prior.
At the prior, the gap between \(G_{\lambda H_\X}(\mu_0)\) and the AI advice chord is the gross loss.}
\label{fig:weighted-triage-geometry-app}
\end{figure}

\begin{proof}[of the claims in Example~\ref{ex:weighted-four-state}]
We verify the four-class triage dilution example.
The payoff is
\[
    u(a,x)=w_x\one\{a=x\},
    \qquad
    w=(30,30,30,5),
\]
and the uncertainty function is \(\Phi=\lambda H_\X\) with \(\lambda=100\).
The prior and AI posteriors are
\[
    \mu_0=
    \left(\frac{19}{150},\frac{19}{150},\frac{19}{150},\frac{31}{50}\right),
\]
\[
    \mu^\ell=
    \left(\frac{1}{10},\frac{1}{10},\frac{1}{10},\frac{7}{10}\right),
    \qquad
    \mu^r=
    \left(\frac{7}{30},\frac{7}{30},\frac{7}{30},\frac{3}{10}\right),
\]
with \(P(Y=\ell)=4/5\) and \(P(Y=r)=1/5\).
Hence
\[
    \mu_0=\frac45\mu^\ell+\frac15\mu^r .
\]

For a candidate active set \(J\subseteq\A\), the Shannon KKT system is
\[
    \pi(a\mid x)
    =
    \frac{r_a\exp(u(a,x)/\lambda)}
    {\sum_{b\in J}r_b\exp(u(b,x)/\lambda)},
    \qquad
    r_a=\sum_{x\in\X}\mu(x) \pi(a\mid x),
    \qquad a\in J.
\]
For each inactive action \(a\notin J\), the corresponding KKT inequality is
\[
    \sum_{x\in\X}
    \mu(x)
    \frac{\exp(u(a,x)/\lambda)}
    {\sum_{b\in J}r_b\exp(u(b,x)/\lambda)}
    \leq 1.
\]

At the no-AI prior, the active set is \(J_0=\{1,2,3\}\), with
\[
    r(\mu_0)=(1/3,1/3,1/3,0).
\]
The inactive-action KKT quantity for action \(4\) is
\[
    0.99210093<1,
\]
so action \(4\) is inactive.
The resulting continuation values are
\[
    W_{\lambda H_\X}(\mu_0)=4.19162455,
    \qquad
    G_{\lambda H_\X}(\mu_0)=4.59374299,
    \qquad
    K_{\lambda H_\X}(\mu_0)=0.40211843.
\]

On the routine-heavy branch \(\mu^\ell\), the active set is \(J_\ell=\{1,2,3,4\}\), with action-prior vector
\[
    r(\mu^\ell)
    =
    (0.04960951,\ 0.04960951,\ 0.04960951,\ 0.85117147).
\]
The resulting branchwise continuation values are
\[
    W_{\lambda H_\X}(\mu^\ell)=3.50627804,
    \qquad
    G_{\lambda H_\X}(\mu^\ell)=3.59329234,
    \qquad
    K_{\lambda H_\X}(\mu^\ell)=0.08701431.
\]

On the high-stake branch \(\mu^r\), the active set is \(J_r=\{1,2,3\}\), with
\[
    r(\mu^r)=(1/3,1/3,1/3,0).
\]
The inactive-action KKT quantity for action \(4\) is
\[
    0.94227342<1,
\]
so action \(4\) remains inactive.
The resulting branchwise continuation values are
\[
    W_{\lambda H_\X}(\mu^r)=7.72141365,
    \qquad
    G_{\lambda H_\X}(\mu^r)=8.46215813,
    \qquad
    K_{\lambda H_\X}(\mu^r)=0.74074448.
\]

After observing the AI signal, the continuation values are
\[
    W^{\mathrm{AI}}_{\lambda H_\X}(Y,\mu_0)
    =
    \frac45 W_{\lambda H_\X}(\mu^\ell)+\frac15 W_{\lambda H_\X}(\mu^r)
    =
    4.34930516,
\]
\[
    G^{\mathrm{AI}}_{\lambda H_\X}(Y,\mu_0)
    =
    \frac45 G_{\lambda H_\X}(\mu^\ell)+\frac15 G_{\lambda H_\X}(\mu^r)
    =
    4.56706550,
\]
and the paid acquisition cost is
\[
    \frac45 K_{\lambda H_\X}(\mu^\ell)+\frac15 K_{\lambda H_\X}(\mu^r)
    =
    0.21776034.
\]
For the AI signal, the differences relative to the no-AI benchmark are
\[
    \Delta W_\Phi
    =
    W^{\mathrm{AI}}_\Phi(Y,\mu_0)-W_\Phi(\mu_0)
    =
    0.15768061>0,
\]
\[
    \Delta G_\Phi
    =
    G^{\mathrm{AI}}_\Phi(Y,\mu_0)-G_\Phi(\mu_0)
    =
    -0.02667749<0,
\]
and
\[
    \Delta K_\Phi
    =
    \frac45 K_\Phi(\mu^\ell)+\frac15 K_\Phi(\mu^r)-K_\Phi(\mu_0)
    =
    -0.18435809<0.
\]
Thus the AI signal strictly raises net value and generates gross harm.
The mechanism is dilution.
The frequent routine-heavy branch activates the low-stake class and shifts the action prior toward it, which saves acquisition cost while lowering gross classification payoff.
The displayed decimals are rounded.
The active probabilities are positive, the inactive inequalities are strict, and the dual Hessian is negative definite on each active simplex.
Together with the strict gross loss, these conditions imply local robustness by Lemma~\ref{lem:kkt-app}.
\end{proof}

\FloatBarrier
\subsection{Safe-Option and Informativeness Verification}
\label{app:safe-option-informativeness}

The safe-option environment used in Fig.~\ref{fig:informativeness-gross} has three states and three actions.
Rows index states and columns index actions.
The payoff matrix is
\[
    U=
    \begin{pmatrix}
    6&3&-12\\
    1&3&1\\
    -12&3&6
    \end{pmatrix},
    \qquad
    \lambda=4.
\]
The middle action is a safe default.
It avoids catastrophic mismatch but does not match either extreme state well.
Input posterior coordinates are given to six decimal places.
Continuation values below are rounded for display.
The prior is
\[
    \mu_0=(0.204151,0.091782,0.704067).
\]
Let \(Y'\) denote the null AI signal, which has a single message and posterior belief \(\mu_0\).
Let \(Y''\) denote the coarse binary AI signal with equal message probabilities and posterior beliefs
\[
    \mu^L=(0.386988,0.148216,0.464796),
    \qquad
    \mu^R=(0.021314,0.035348,0.943338).
\]
Their average is \(\mu_0\).

Support enumeration using Lemma~\ref{lem:kkt-app} gives
\[
\begin{aligned}
    r(\mu^L)&\approx(0,0.91317233,0.08682767),\\
    r(\mu^R)&\approx(0.01135764,0,0.98864236),\\
    r(\mu_0)&\approx(0.21859845,0,0.78140155).
\end{aligned}
\]
The maximum active-gradient residual across these three rows is below \(2\times10^{-14}\).
The minimum inactive slack is approximately \(3.22036\times10^{-4}\), and the active-support negative Hessians are positive definite.
The supplied script checks these global KKT conditions and reports the following continuation values:
\[
\begin{array}{c|rrr}
 &W&G&K\\
\hline
\mu^L&3.01434625&3.20453274&0.19018649\\
\mu^R&5.45655154&5.63248563&0.17593409\\
\mu_0&3.64528931&5.36147630&1.71618698
\end{array}
\]
For the coarse binary signal \(Y''\), endpoint averages give
\[
    \Delta W_\Phi>0.5901,
    \qquad
    \Delta G_\Phi<-0.9429,
    \qquad
    \Delta K_\Phi<-1.5331.
\]
Thus the coarse binary signal strictly raises net value and generates gross harm.

The analytical decrease-then-increase construction is proved in Appendix~\ref{app:proof-informativeness-nonmonotonicity}.
Here the coarse-signal gross value is approximately \(4.4185\), compared with the no-AI value \(5.3615\), a gross loss of about \(17.6\%\).

For visualization, Fig.~\ref{fig:informativeness-gross} plots a two-segment Blackwell-increasing path through the same environment.
The horizontal coordinate is a Blackwell path parameter \(\tau\), not normalized mutual information.
On the first segment, \(\tau\in[0,1]\), the posterior split is
\[
    (1-\tau)\delta_{\mu_0}
    +
    \frac{\tau}{2}\delta_{\mu^L}
    +
    \frac{\tau}{2}\delta_{\mu^R}.
\]
Thus \(\tau=0\) is \(Y'\) and \(\tau=1\) is \(Y''\).
On the second segment, \(\tau=1+\beta\) with \(\beta\in[0,1]\), the posterior split is
\[
    \frac{1-\beta}{2}\delta_{\mu^L}
    +
    \frac{1-\beta}{2}\delta_{\mu^R}
    +
    \beta\sum_{x\in\X}\mu_{0,x}\delta_{e_x}.
\]
The less informative experiment is recovered from the more informative one by the corresponding garbling along each segment.
At the coarse binary signal,
\[
    I(X;Y'')/H(X)\approx0.2021,
\]
reported only as a separate statistic.
At full revelation, the gross utility is approximately
\[
    \sum_x\mu_{0,x}v(e_x)\approx5.7247.
\]

\subsection{Distance-Payoff Interpretation Illustration}
\label{app:distance-interpretation}

The analytical results on endogenous interpretation are proved in Appendices~\ref{app:proof-endogenous-interpretation} and~\ref{app:proof-cheaper-interpretation}.

The numerical illustration in Fig.~\ref{fig:costly-interpretation-gross} uses the distance-payoff environment
\[
    U=
    \begin{pmatrix}
    4&3&0\\
    3&4&3\\
    0&3&4
    \end{pmatrix},
    \qquad
    \Phi=1.5H_\X.
\]
The AI signal is binary with equal probabilities and posterior beliefs
\[
    \mu^L=(0.538308,0.074776,0.386916),
    \qquad
    \mu^R=(0.087382,0.000055,0.912563).
\]
The prior is their average,
\[
    \mu_0=(0.312845,0.0374155,0.6497395).
\]
For an interpretation posterior \(q=P(Y=L\mid Z)\), the continuation belief is
\[
    \mu(q)=q\mu^L+(1-q)\mu^R.
\]
With interpretation uncertainty function \(\Psi=H_\Y\), the interpretation objective at multiplier \(\lambda_1\) is the concavification at \(q=1/2\) of
\[
    W_{1.5H_\X}(\mu(q))+\lambda_1H_\Y(q),
\]
minus \(\lambda_1\log2\).
This expression is Eq.~\eqref{eq:extension-value} specialized to a binary AI signal.

The stage-two Shannon continuation problem is solved on a \(5001\)-point \(q\)-grid using the logit dual and Lemma~\ref{lem:kkt-app}.
The maximum active-gradient residual on this grid is below \(1.4\times10^{-14}\), and all inactive-gradient inequalities hold with nonnegative slack.
The smallest inactive slack on the grid is approximately \(6.03\times10^{-6}\).
The first-stage interpretation problem is solved on \(551\) values of \(\lambda_1\in[0,0.55]\) with step size \(0.001\), using the upper concave hull of \(W_{1.5H_\X}(\mu(q))+\lambda_1H_\Y(q)\).
The supplied script recomputes this upper hull and the rows below.
The following representative rows record the selected interpretation chord, interpretation cost, gross utility, and net gain:
\[
\begin{array}{c|cccccc}
\lambda_1 & q^- & q^+ & \theta^- & C^{\lambda_1H_\Y} & G^{\mathrm{AI}} & \text{net gain}\\
\hline
0.000 & 0.0000 & 1.0000 & 0.5000 & 0.0000 & 3.4674 & 0.2257\\
0.100 & 0.0000 & 0.9998 & 0.4999 & 0.0692 & 3.4673 & 0.1564\\
0.180 & 0.0024 & 0.9914 & 0.4969 & 0.1188 & 3.4622 & 0.1019\\
0.227 & 0.0092 & 0.9744 & 0.4915 & 0.1378 & 3.4521 & 0.0720\\
0.300 & 0.0380 & 0.9170 & 0.4744 & 0.1399 & 3.4788 & 0.0324\\
0.400 & 0.2146 & 0.7050 & 0.4180 & 0.0491 & 3.5991 & 0.0010\\
0.414 & 0.5000 & 0.5000 & 1.0000 & 0.0000 & 3.7124 & 0.0000
\end{array}
\]

The no-AI gross value is
\[
    G_{1.5H_\X}(\mu_0)\approx3.7124.
\]
At \(\lambda_1=0\), full interpretation is selected on the reported grid, and the extension gross utility is
\[
    G^{\mathrm{AI}}_{0\cdot H_\Y,1.5H_\X}(Y,\mu_0)\approx3.4674.
\]
This value is strictly below the no-AI gross value, so endogenous interpretation can generate gross harm in the numerical instance.

The lowest reported grid value is approximately \(3.4521\) at \(\lambda_1=0.227\).
At \(\lambda_1=0.414\), the grid calculation selects the no-interpretation policy and returns the no-AI gross value.
These grid observations illustrate the intermediate-cost pattern and do not establish a global cutoff or prove Proposition~\ref{prop:cheaper-interpretation-harms}.
The analytical proof in Appendix~\ref{app:proof-cheaper-interpretation} establishes no informative interpretation for every \(\lambda_1>1.5\) in this Shannon cost pair.
The small-positive-cost argument following Theorem~\ref{thm:shannon-nonconvexity-harm} explains why endogenous gross harm need not be confined to zero interpretation cost.

\subsection{Quadratic/Gini Treatment with Gross and Net Harm}
\label{app:gini-treatment}

Consider the treatment payoff in Example~\ref{ex:treatment}, with \(\X=\{L,M,H\}\), \(\A=\{0,1\}\), no-treatment payoff \(u(0,x)=0\), and treatment payoffs \((u(1,L),u(1,M),u(1,H))=(1,-3,3)\).
The payoffs are unchanged, but the prior, AI signal, and uncertainty function differ.
The Quadratic/Gini uncertainty function and its UPS cost are
\[
    \Phi_\gamma^G(\mu)=\gamma\left(1-\sum_x\mu_x^2\right),
    \qquad
    C^{\Phi_\gamma^G}(\mu;\eta)
    =\gamma\left(\E_{p\sim\eta}\sum_x p_x^2-\sum_x\mu_x^2\right),
\]
with \(\gamma=2\).
The prior and equally likely AI posteriors are
\[
    \mu_0=\left(\frac{13}{40},\frac38,\frac3{10}\right),
    \qquad
    \mu^h=\left(0,\frac25,\frac35\right),
    \qquad
    \mu^\ell=\left(\frac{13}{20},\frac7{20},0\right).
\]
Their average is \(\mu_0\).
An equivalent signal structure is \(P(h\mid L)=0\), \(P(h\mid M)=8/15\), and \(P(h\mid H)=1\).

\paragraph{Optimal acquisition and verification.}
For each starting belief, the optimal experiment uses two posteriors.
Let \(p^0\) be the posterior followed by no treatment, \(p^1\) the posterior followed by treatment, and \(t=P(a=1)\).
The following table reports the optimal splits.
Decimals are rounded, and fractions are exact.
\[
\begin{array}{ccll}
\toprule
\text{Belief} & t & p^0 & p^1\\
\midrule
\mu_0 & 0.512748885 &
(0.230376313,0.769623687,0) &
(0.414918279,0,0.585081721)\\[1mm]
\mu^h & 19/30 & (0,7/8,1/8) & (0,1/8,7/8)\\[1mm]
\mu^\ell & 3/10 & (1/2,1/2,0) & (1,0,0)\\
\bottomrule
\end{array}
\]
To specify the first row without rounding, let \(t_0\in(0,1)\) solve
\[
    \frac{9}{16(1-t_0)^2}-\frac{9}{25t_0^2}=1.
\]
The left side is strictly increasing from \(-\infty\) to \(+\infty\), so the solution is unique.
It lies between \(0.51274888481\) and \(0.51274888482\).
Set \(a=1-3/[8(1-t_0)]\) and \(b=1-3/(10t_0)\).
The exact first-row posteriors are \(p^0=(a,1-a,0)\) and \(p^1=(b,0,1-b)\), with treatment probability \(t_0\).
Every row satisfies \((1-t)p^0+tp^1=\mu\).

Global optimality follows from a common affine majorant of
\[
    f_0(p)=\Phi_2^G(p),
    \qquad
    f_1(p)=u(1,\cdot)\cdot p+\Phi_2^G(p).
\]
For each starting belief, take \(L_\mu(p)=\ell_\mu\cdot p+c_\mu\).
At the prior, exact coefficients are
\[
    \ell_{\mu_0}=(2-8b,\,8a-8b-2,\,0),
    \qquad
    c_{\mu_0}=4a^2-8a+8b+2.
\]
For the two AI posteriors, take
\[
    (\ell_{\mu^h},c_{\mu^h})=((2,-3,0),49/16),
    \qquad
    (\ell_{\mu^\ell},c_{\mu^\ell})=((0,0,7),1).
\]
These coefficients satisfy \(f_j(p^j)=L_\mu(p^j)\) for \(j=0,1\) and the simplex KKT conditions
\[
    \nabla_i f_j(p^j)-\ell_{\mu,i}
    \begin{cases}
        =\kappa_j, & p_i^j>0,\\
        \le\kappa_j, & p_i^j=0.
    \end{cases}
\]
Both functions \(f_j\) are strictly concave.
The KKT conditions therefore make \(p^j\) the unique global maximizer of \(f_j-L_\mu\), with maximum zero.
Thus the affine function majorizes both pieces everywhere and touches at the stated posteriors.
Bayes plausibility then establishes global optimality of each split.
The two contact points and their mixing weight are unique, so the reported values do not depend on tie-breaking.

\paragraph{Gross and net comparisons.}
The continuation values and their AI averages are
\[
\begin{array}{lrrr}
\toprule
\text{Belief or comparison} & G_\Phi & W_\Phi & K_\Phi\\
\midrule
\mu_0 & 1.112748885 & 0.628715288 & 0.484033596\\
\mu^h & 1.425000000 & 0.902500000 & 0.522500000\\
\mu^\ell & 0.300000000 & 0.090000000 & 0.210000000\\
\text{AI average} & 0.862500000 & 0.496250000 & 0.366250000\\
\text{AI minus no AI} & -0.250248885 & -0.132465288 & -0.117783596\\
\bottomrule
\end{array}
\]
Relative to the no-AI benchmark, AI advice reduces gross utility by approximately \(22.5\%\) and net utility by approximately \(21.1\%\).
Although AI lowers acquisition cost, this saving is too small to offset the loss in final-action payoff.
Gross harm can therefore occur together with net harm.